\newcolumntype{x}[1]{%
>{\centering\hspace{0pt}}p{#1}}%
\definecolor{Gray}{gray}{0.9}
\newcolumntype{g}{>{\columncolor{Gray}}c}
\def\iid{\buildrel {\rm i.i.d.} \over \sim}
\def\i.i.d.{\buildrel {\rm i.i.d.} \over \sim}
\def\cw#1 { \overset{\mathbb{P}}{\underset{#1}{\longrightarrow}} }
\def\Real{\mathbb{R}}
\def\Natu0{\mathbb{N}_0}
\def\P#1{{\mathbb{P}}\left(#1\right)}
\def\E#1{{\mathbb E}\left[#1\right]}
\def\Var#1{{\mathrm Var}\left(#1\right)}
\def \rcov#1#2 {{\rm cov}_{#1}\left( #2\right)}
\DeclareMathOperator*{\argmin}{arg\,min}
\newtheorem{example}{Example}
\newtheorem{theorem}{Theorem}
\newtheorem{corollary}{Corollary}
\newtheorem{remark}{Remark}
\newtheorem{proposition}{Proposition}
\newtheorem{assumption}{Assumption}
\newtheorem*{toy*}{Toy Model}
\def\cov#1{{\rm  cov}\left[#1\right]}
\begin{document}
\begin{frontmatter}
\title{Implementation of Frequency-Severity Association in BMS Ratemaking}


\author[EH]{Rosy Oh}
\ead{rosy.oh5@gmail.com}

\author[WI]{Peng Shi\corref{cor2}}
\ead{pshi@bus.wisc.edu}

\author[EH]{Jae Youn Ahn\corref{cor2}}
\ead{jaeyahn@ewha.ac.kr}
\address[EH]{Department of Statistics, Ewha Womans University, Seodaemun-Gu, Seoul 03760, Korea.}
\address[WI]{Wisconsin School of Business, University of Wisconsin-Madison, Madison, WI 53706, USA.}
\cortext[cor2]{Corresponding Authors}

\begin{abstract}
A Bonus-Malus System (BMS) in insurance is a premium adjustment mechanism widely used in a
posteriori ratemaking process to set the premium for the next contract period based on a policyholder's claim history. The current practice in BMS implementation relies on the assumption of independence
between claim frequency and severity, despite the fact that a series of recent studies report evidence of a
significant frequency-severity relationship, particularly in automobile insurance. To address this discrepancy, we propose
a copula-based correlated random effects model to accommodate the dependence between claim frequency
and severity, and further illustrate how to incorporate such dependence into the current BMS. We derive analytical
solutions to the optimal relativities under the proposed framework and provide numerical experiments based on real data analysis to assess the effect of frequency-severity dependence in BMS ratemaking.

\end{abstract}

\begin{keyword}
 Frequency-severity Dependence\sep Bonus-Malus System \sep Copula \sep Bivariate Random Effects

JEL Classification: C300
\end{keyword}

\end{frontmatter}

\vfill

\pagebreak

\vfill

\pagebreak

\section{Introduction}
A Bonus-Malus System (BMS) is an a posteriori ratemaking mechanism to set a premium based on a policyholder's claim history.
While a BMS could take various forms across different regions, all share three major features.
First, a policyholder is classified into a risk class, or a BMS level, according to his or her past claim history.
Second, at each policy renewal, the BMS level of the policyholder either stays at the current level or migrates to another level according to pre-specified transition rules based on the claim experience in the current policy year.
Third, the premium is linked directly to the BMS level via a premium adjustment coefficient known as the Bonus-Malus (BM) relativity. Specifically, the premium is adjusted as the product of BM relativity and the base premium, which is determined by the policyholder's risk characteristics. Under the assumption of independence between claim frequency and severity, one of the key assumptions in the classical BMS literature,
the base premium is the product of the mean frequency and mean severity \citep{Pitrebois2003, Denuit2, Chong}.

%

However, recent insurance studies report significant dependence between claim frequency and severity. To accommodate such dependence, prior works propose various statistical models. Examples include copula models \citep{Czado, Frees4}, shared or bivariate random effects models \citep{Cossette, Bastida, Czado2015}, and two-part models \citep{Peng, Garrido, Park2018does, AhnValdez2}.
Despite the aforementioned models, the frequency-severity dependence has not been explicitly examined in the BMS context. On one hand, it is arguable that if claim frequency and severity are correlated, then it is critical to incorporate this relationship in the BMS implementation. This is because if claim frequency is used in the current BMS to indicate the latent risk of a policyholder, then so is claim severity. On the other hand, implementing this dependence in the current implementation of BMS is challenging. First, the posterior premium depends on both the base premium and BMS relativity. The frequency-severity correlation will affect both components in the posterior risk adjustment. Second, the current BMS transition rule is determined solely by the history of claim frequency. Thus, it is not straightforward to incorporate the frequency-severity association in the current BMS framework.


Alternatively, the literature suggests various forms of modified BMS to accommodate this dependence \citep{Pinquet, Gomez}. However, the risk adjustment process in the proposed BMS inevitably involve both claim frequency and severity, which result in an increase in the complexity in the a posteriori ratemaking system.

The goal of our study is to propose a simple yet flexible framework to incorporate the frequency-severity association into the posterior risk adjustment under the current form of BMS. Specifically, we employ a copula-based bivariate random effects model to accommodate the dependence between claim frequency and severity. Using the proposed model, we further derive the optimal BM relativity in which the transition rule depends on only the claim frequency, and yet, the BMS can capture the association between frequency and severity.
%

The remainder of this paper is organized as follows.
Section \ref{bm.sec} presents the basic definitions and notations.
In Section \ref{sec.3}, we propose a copula-based random effects model for the general frequency-severity model. Section \ref{sec.4} shows that, through a simple modification of the optimal relativities, the current BMS
can properly accommodate the frequency-severity dependence. Section \ref{sec.6} explains the impact of the dependence on the relativities through a real data analysis and numerical study,
followed by concluding remarks in Section \ref{sec.7}.

\section{Definitions and Notations}\label{bm.sec}

 We consider a portfolio of policyholders in the context of short-term insurance, where a policyholder could decide whether or not to renew the policy at the end of each policy year and the insurer can adjust the premium at the beginning of each policy year based on the policyholder's claim experience. We denote $\mathbb{N}$, $\mathbb{N}_0$, $\Real$, and $\Real^+$ by the set of natural numbers, the set of non-negative integers, the set of real numbers, and the set of positive real numbers, respectively.
Define $N_{it}$ to indicate the number of claims of the $i$-th policyholder in the $t$-th policy year, and
\[
\boldsymbol{Y}_{it}:=\begin{cases}
  (Y_{it, 1}, \cdots, Y_{it, N_{it}}) & N_{it}>0\\
  \hbox{Not defined} & N_{it}=0
\end{cases}
\]
denotes the associated individual claim amount. We further define the \textit{aggregate claim amount} and average claim amount as
\[
S_{it}:=\begin{cases}
\sum\limits_{j=1}^{N_{it}}Y_{it,j} & N_{it}>0\\
0 &N_{it}=0
\end{cases}
\quad\quad
\hbox{and}
\quad\quad
M_{it}:=\begin{cases}
\frac{\sum\limits_{j=1}^{N_{it}}Y_{it,j}}{N_{it}} & N_{it}>0\\
\hbox{0} &N_{it}=0
\end{cases},
\]
respectively, and they satisfy the relationship
\[
S_{it}=N_{it}M_{it}.
\]
We denote the realizations of $N_{it}$, $\boldsymbol{Y}_{it}$, $S_{it}$, and $M_{it}$ by $n_{it}$, $\boldsymbol{y}_{it}$, $s_{it}$, and $m_{it}$, respectively. The actuarial science literature often refers to $N_{it}$ as the \textit{frequency}, $\boldsymbol{Y}_{it}$ as the \textit{individual severity}, and $M_{it}$ as the \textit{average severity} of the insurance claims. Furthermore, we define the claim history of the $i$-th policyholder at the end of year $T$ as:
\[
\mathcal{F}_{it}:=\left\{(n_{i1}, m_{i1} ), \cdots, (n_{it}, m_{it} )
\right\}.
\]
We emphasize that our proposed method requires only information on the average severity, and imposes no constraints on individual severity. In the following text, we refer to the model for $(N_{it}, M_{it})$
as the \textit{frequency-severity model}.

In the BMS context, we differentiate between \textit{a priori} and \textit{a posterior} ratemaking. The former is based on observed policyholders' risk characteristics and the latter is based on policyholders' unobserved risk characteristics. Let $\left(\boldsymbol{X}_i^{[1]},\boldsymbol{X}_i^{[2]}\right)$ and $\left(\Theta_i^{[1]},\Theta_i^{[2]}\right)$ denote, respectively, the observed and unobserved risk characteristics for the $i$-th policyholder.
Note that $\left(\boldsymbol{X}_i^{[1]},\boldsymbol{X}_i^{[2]}\right)$, and $\left(\Theta_i^{[1]},\Theta_i^{[2]}\right)$ do not have the subscript $t$, as we assume that they are constant in time.
The superscripts [1] and [2] are indices for the frequency and severity components, respectively.

To model the frequency and severity of insurance claims, we follow \cite{deJong2008} and use generalized linear models (GLMs). Specifically, we consider the {\it exponential dispersion family} (EDF) in \citet{Nelder1989}. The EDF with mean $\mu$ and dispersion $\psi$, denoted by $F(\cdot;\mu, \psi)$, has the probability density/mass function
\[
p(y|\theta,\psi)=\exp\left[(y\theta-b(\theta))/\psi \right]c(y,\psi),
\]
where $\theta$ is the canonical parameter, and $b(\cdot)$ and $c(\cdot)$ are predetermined functions. The mean of the random variable can be shown to be $$\mu=\E{Y}=b^\prime(\theta),$$
and the variance is calculated as $$\Var{Y}=b^{\prime\prime}(\theta)\psi\equiv V(\mu)\psi,$$ where the inverse of $b^{'}(\cdot)$ is the link function, and $V(\cdot)$ is the variance function.

\section{Copula-based Correlated Random Effects Model}\label{sec.3}

This section presents the general frequency-severity modeling framework. We start with a homogeneous model that does not allow belief updating for policyholders' risk profile, and then we extend the model to allow for unobserved subject-specific effects. Suppose that the insurer predetermines $\mathcal{K}$ risk classes based on the policyholders' risk characteristics. Let $\left(\boldsymbol{x}_{\kappa}^{[1]},\boldsymbol{x}_{\kappa}^{[2]}\right)$ define the $\kappa$-th risk class, and $w_\kappa$ be the weight of the risk class; then,
  \begin{equation}\label{eq.1}
  w_\kappa:= \P{\boldsymbol{X}_{i}^{[1]}=\boldsymbol{x}_{\kappa}^{[1]},
  \boldsymbol{X}_{i}^{[2]}=\boldsymbol{x}_{\kappa}^{[2]}
  }, \quad \kappa = 1,\cdots, \mathcal{K}.
  \end{equation}
Denote $\left(\Lambda_i^{[1]},\Lambda_i^{[2]}\right)$ as the a priori premium for the $i$-th policyholder, which is determined by the observed risk characteristics as follows:
\[
        \Lambda_{i}^{[1]}=\eta_1^{-1} \left(\boldsymbol{X}^{[1]}_{i}\boldsymbol{\beta}^{[1]}\right)
        \quad\hbox{and}\quad \Lambda_{i}^{[2]}=\eta_2^{-1}\left(\boldsymbol{X}^{[2]}_{i}\boldsymbol{\beta}^{[2]}\right),
  \]
where $\eta_1(\cdot)$ and $\eta_2(\cdot)$ are link functions, and $\boldsymbol{\beta}^{[1]}$ and $\boldsymbol{\beta}^{[2]}$ are the parameters to estimate.

\bigskip
{\bf Homogeneous Model.}
Conditioning on the observed risk characteristics, we construct the frequency-severity model for $(N_{it},M_{it})$ as follows:
\begin{enumerate}
    \item[i.] We specify the frequency component $N_{it}$
    conditional on the observed risk characteristics
    $$\left(\boldsymbol{X}_{i}^{[1]}, \boldsymbol{X}_{i}^{[2]}\right)=\left(\boldsymbol{x}_{i}^{[1]}, \boldsymbol{x}_{i}^{[2]}\right)$$
    using a count regression model, such that
        \[
        N_{it}|\left(\boldsymbol{x}_{i}^{[1]},\boldsymbol{x}_{i}^{[2]}\right) \sim {F}_1\left(\cdot ; \lambda_{i}^{[1]}, \psi^{[1]} \right), \quad\hbox{with}\quad \lambda_{i}^{[1]}= \eta_1^{-1}\left(\boldsymbol{x}_{i}^{[1]}\boldsymbol{\beta}^{[1]}\right),
        \]
    where the distribution $F_1$ has mean parameter $\lambda_{i}^{[1]}$ and the other parameter is $\psi^{[1]}$.
For $F_1$ in EDF, $\psi^{[1]}$ is the dispersion parameter.

   \item[ii.] For severity, we specify the conditional distribution of the average severity conditional on the number of claims $N_{it}=n_{it}$ and the observed risk characteristics $$\left(\boldsymbol{X}_{i}^{[1]}, \boldsymbol{X}_{i}^{[2]}\right)=\left(\boldsymbol{x}_{i}^{[1]}, \boldsymbol{x}_{i}^{[2]}\right).$$ Specifically, we assume
  \[
       M_{it}\big\vert \left(n_{it}, \boldsymbol{x}_{i}^{[1]}, \boldsymbol{x}_{i}^{[2]}\right) \iid {F}_2\left(\cdot\,;\, \lambda_{i}^{[2]}, \psi^{[2]}/n_{it}\right), \quad n_{it}>0
  \]
and
  \[
       \P{M_{it}=0\big\vert N_{it}=n_{it}, \boldsymbol{x}_{i}^{[1]},\boldsymbol{x}_{i}^{[2]}}=1, \quad  n_{it}=0,
  \]
   where the distribution $F_2$ in EDF has mean parameter $ \lambda_{i}^{[2]}$ 
    with $\lambda_{i}^{[2]}= \eta_2^{-1}\left(\boldsymbol{x}_{i}^{[2]} \boldsymbol{\beta}^{[2]} \right)$,
   and dispersion parameter $\psi^{[2]}/n_{it}$. This specification follows \citet{Frees4} and \citet{Garrido}, where the focus is to introduce the dependence between frequency and severity in the context of GLMs.

\end{enumerate}
Note that the homogeneous model allows for observed heterogeneity and the dependence between the frequency and severity components is accommodated through conditional distributions. Furthermore, given the policyholders' observed risk characteristics, we assume that the vector $(N_{it},S_{it})$ is independent across subjects and across time. Thus, there is no experience learning cross-sectionally and intertemporally.

\bigskip
{\bf Heterogeneous Model.} Presumably, insurers' risk classifications are not complete in that there is unobserved heterogeneity after accounting for policyholders' observed risk characteristics. Further, such unobserved heterogeneity can be learned over time with repeated observations. To accommodate such unobserved risk characteristics, we consider the following correlated random effects model:

\begin{enumerate}
    \item[i.] We specify the frequency component $N_{it}$ using a count regression model conditional on the observed and unobserved risk characteristics
        $$\left(\boldsymbol{X}_{i}^{[1]}, \boldsymbol{X}_{i}^{[2]}\right)=\left(\boldsymbol{x}_{i}^{[1]}, \boldsymbol{x}_{i}^{[2]}\right)
        \quad\hbox{and}\quad \left(\Theta_i^{[1]}, \Theta_i^{[2]}\right)
    =\left(\theta_i^{[1]}, \theta_i^{[2]}\right),
        $$  
    such that
       \begin{equation}\label{eq.n}
        N_{it}\big\vert \left(\boldsymbol{x}_{i}^{[1]},\boldsymbol{x}_{i}^{[2]}, \theta_i^{[1]},
        \theta_i^{[2]}\right) \iid {F}_1\left(\cdot ; \lambda_{i}^{[1]}\theta_i^{[1]}, \psi^{[1]} \right), 
        \end{equation}
   where the distribution $F_1$ has a mean parameter of $\lambda_{i}^{[1]}\theta_i^{[1]}$ with $\lambda_{i}^{[1]}= \eta_1^{-1}\left(\boldsymbol{x}_{i}^{[1]}\boldsymbol{\beta}^{[1]}\right)$ and the other parameter is $\psi^{[1]}$. For $F_1$ in EDF, $\psi^{[1]}$ is the dispersion parameter.

   \item[ii.] The conditional distribution of average severity conditional on
the frequency $N_{it}=n_{it}$ and the observed and unobserved risk characteristics
        $$\left(\boldsymbol{X}_{i}^{[1]}, \boldsymbol{X}_{i}^{[2]}\right)=\left(\boldsymbol{x}_{i}^{[1]}, \boldsymbol{x}_{i}^{[2]}\right)
        \quad\hbox{and}\quad \left(\Theta_i^{[1]}, \Theta_i^{[2]}\right)
    =\left(\theta_i^{[1]}, \theta_i^{[2]}\right),
        $$
   is given by
  \begin{equation}\label{eq.s}
       M_{it}\big\vert \left(n_{it}, \boldsymbol{x}_{i}^{[1]},\boldsymbol{x}_{i}^{[2]}, \theta_i^{[1]},
        \theta_i^{[2]}\right) \iid {F}_2\left(\cdot\,;\, \lambda_{i}^{[2]}\theta_i^{[2]}, \psi^{[2]}/n_{it}\right), \quad n_{it}>0
  \end{equation}
 and
  \[
       \P{M_{it}=0\big\vert n_{it}, \boldsymbol{x}_{i}^{[1]},\boldsymbol{x}_{i}^{[2]}, \theta_i^{[1]},
        \theta_i^{[2]}}=1, \quad  n_{it}=0,
  \]
   where the distribution $F_2$ in the EDF has mean parameter $\lambda_{i}^{[2]}\theta_i^{[2]}$ 
     with $\lambda_{i}^{[2]}=\eta_2^{-1}\left(\boldsymbol{x}_{i}^{[2]} \boldsymbol{\beta}^{[2]}\right)$
   and dispersion parameter $\psi^{[2]}/n_{it}$.

   \item[iii.] The unobserved risk characteristic $\left(\Theta_i^{[1]},\Theta_i^{[2]}\right)$, which we assume is independent of the observed risk characteristic
   $\left(\boldsymbol{X}_{i}^{[1]}, \boldsymbol{X}_{i}^{[2]} \right)$, has a joint distribution, $H$, of
   \begin{equation}\label{eq.4}
  \left(\Theta_i^{[1]}, \Theta_i^{[2]} \right)\iid H=C\left(G_1, G_2 \right), \quad i=1, \cdots, I,
  \end{equation}
  where $G_1$ and $G_2$ denote the marginal distributions for $\Theta_i^{[1]}$ and $\Theta_i^{[2]}$, respectively. The function $C$ is the bivariate copula that defines the joint distribution $H$. We use $g_1$, $g_2$, and $h$ to denote the density versions of $G_1$, $G_2$, and $H$, respectively.
Further, we assume
\[
       \E{\Theta_i^{[1]}} = 1
        \quad\hbox{and}\quad \E{\Theta_i^{[2]}} = 1,
  \]
for identification purposes.

\end{enumerate}

The Heterogeneous Model belongs to the family of correlated random effects models in which the unobserved risk characteristics are treated as random effects.
Given the observed risk characteristics and the unobserved risk, we assume that the claim history
  $  \left(N_{it}, M_{it}\right)  $
  is independent across subjects and across time.
Different from the homogeneous model, the subject-specific effects introduce several types of dependence in the context of frequency-severity models, and thus allow insurers to update information on policyholders' risk profile. First, the random effect $\Theta_i^{[1]}$ induces serial correlation in the frequency component and allows insurers to learn the hidden risks affecting the likelihood of a claim. Second, the random effect $\Theta_i^{[2]}$ induces serial correlation in the severity component, which allows insurers to obtain additional information on the hidden risks given claim frequency. Third, the dependence between $\Theta_i^{[1]}$ and $\Theta_i^{[2]}$ accommodates the relationship between frequency and severity, which, as we show, has a critical effect on BMS ratemaking.

As opposed to the conditional probability model in \citet{Frees4} and \citet{Garrido}, the random effects framework is an alternative approach to modeling the dependence between frequency and severity in insurance claims. For example, \citet{Bastida} employ correlated random effects to introduce the association between frequency and individual severity, and \citet{Czado2015} and \citet{JKWoo} consider a special case of a correlated random effects model to allow for the association between frequency and average severity. We adopt a correlated random effects framework in a setting different from these existing studies. First, we use the correlated random effects in a longitudinal context, with the purpose of not only introducing the frequency-severity association, but also accounting for unobserved heterogeneity, which allows the insurer to perform experience rating. Second, in terms of the frequency-severity dependence, our model takes a hybrid approach to the conditional probability model and the random effects model, where the relationship between frequency and severity is accommodated by both the time-constant latent risk and the conditional distribution of the average severity given the frequency.

%

In the following proposition, we provide the expectation formulas conditioned on a priori information under the Heterogeneous Model.
 The conditional expressions are of primary interest to insurers because a priori information is usually available at the time of the contract. For brevity, we omit subscript $i$ in the following text.
 Note that while the frequency distribution ${F}_1$ can be any counting distribution, including zero-inflated distributions, the following proposition assumes that $F_1$ is in the EDF for a simple calculation.

  \begin{proposition}\label{prop.1}
 Under the Heterogeneous Model, we assume that the frequency distribution function ${F}_1$ is in ${\it EDF}(\lambda^{[1]}\Theta^{[1]}, \psi^{[1]})$.
    Furthermore,
  let ${V}_1$ and ${V}_2$ be the variance functions for the frequency distribution $F_1$ and severity distribution $F_2$, respectively. Then, we have the following conditional expressions. 

 \begin{enumerate}
 \item[i.] The mean and variance of the aggregate severity a priori are
\[
\E{S_{t}\big\vert \boldsymbol{x}^{[1]}, \boldsymbol{x}^{[2]}}
=\lambda^{[1]} {\lambda^{[2]}}\E{\Theta^{[1]}\Theta^{[2]} }\\
\]
and
\begin{equation}\label{eq.201}
\begin{aligned}
\Var{S_{t}\bigg\vert \boldsymbol{x}^{[1]}, \boldsymbol{x}^{[2]}}
&=\E{ \lambda^{[1]}\Theta^{[1]} V_2\left(\lambda^{[2]}\Theta^{[2]} \right) \psi^{[2]}
\Bigg\vert \boldsymbol{x}^{[1]}, \boldsymbol{x}^{[2]} }+ \left(\lambda^{[1]} \lambda^{[2]}\right)^2 \Var{\Theta^{[1]}\Theta^{[2]} }\\
 &
 +\left( \lambda^{[2]} \right)^2 \E{\left( \Theta^{[2]} \right)^2 V_1\left(\lambda^{[1]} \Theta^{[1]}\right) \psi^{[1]}\Bigg\vert \boldsymbol{x}^{[1]}, \boldsymbol{x}^{[2]}}.
 \\
\end{aligned}
\end{equation}
\item[ii.] The covariances related to severity a priori are
\[
\cov{S_{t_1}, S_{t_2}\bigg\vert \boldsymbol{x}^{[1]}, \boldsymbol{x}^{[2]}}=\left(\lambda^{[1]}\lambda^{[2]}\right)^2\Var{\Theta^{[1]}\Theta^{[2]}}
\]
and
\[
\cov{N_{t_1}, S_{t_2}\bigg\vert \boldsymbol{x}^{[1]}, \boldsymbol{x}^{[2]}}=
\left(\lambda^{[1]}\right)^2\lambda^{[2]}\E{\left(\Theta^{[1]}\right)^2\Theta^{[2]}}
-\left(\lambda^{[1]}\right)^2\lambda^{[2]}
\E{\Theta^{[1]}\Theta^{[2]}}\\
\]
{for $t_1\neq t_2$, and}
\[
\begin{aligned}
\cov{N_{t}, S_{t}\bigg\vert \boldsymbol{x}^{[1]}, \boldsymbol{x}^{[2]}}&=
\lambda^{[2]} \psi^{[1]}\E{{V}_1\left( \lambda^{[1]} \Theta^{[1]}\right) \Theta^{[2]}} +
\left(\lambda^{[1]}\right)^2\lambda^{[2]}\E{\left(\Theta^{[1]}\right)^2\Theta^{[2]}}\\
&\quad\quad\quad\quad-\left(\lambda^{[1]}\right)^2\lambda^{[2]}
\E{\Theta^{[1]}\Theta^{[2]}}.\\
\end{aligned}
\]
\item[iii.] The covariance between the frequencies a priori is
\[
\cov{N_{t_1}, N_{t_2}\bigg\vert \boldsymbol{x}^{[1]}, \boldsymbol{x}^{[2]}}=\left(\lambda^{[1]}\right)^2\Var{\Theta^{[1]}}
\]
for $t_1\neq t_2$.

\end{enumerate}

\end{proposition}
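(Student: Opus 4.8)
\section*{Proof proposal}

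The plan is to compute every quantity by iterated conditioning: first on the pair of random effects $\left(\Theta^{[1]},\Theta^{[2]}\right)$, and then, for anything involving the aggregate severity $S_t=N_tM_t$, by a further inner conditioning on the frequency $N_t$. The inner step is needed because the dispersion of $M_t$ is $\psi^{[2]}/n_t$, so conditioning on $N_t=n_t$ first lets us apply \eqref{eq.n} and \eqref{eq.s} directly. Throughout I suppress $\boldsymbol{x}^{[1]},\boldsymbol{x}^{[2]}$ and write $\E{\cdot}$ for the a priori expectation obtained after integrating out $\left(\Theta^{[1]},\Theta^{[2]}\right)$ against $H$ from \eqref{eq.4}; the key structural input is the model's assumption that, conditional on $\left(\Theta^{[1]},\Theta^{[2]}\right)$, the pairs $\left(N_t,M_t\right)$ are independent across $t$.

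First I would record the moments of $S_t$ conditional on $\left(\Theta^{[1]},\Theta^{[2]}\right)$. Because $F_1$ is assumed to lie in the EDF, $\E{N_t|\Theta^{[1]},\Theta^{[2]}}=\lambda^{[1]}\Theta^{[1]}$ and $\Var{N_t|\Theta^{[1]},\Theta^{[2]}}=V_1\!\left(\lambda^{[1]}\Theta^{[1]}\right)\psi^{[1]}$; this is the only point at which the EDF hypothesis on the frequency is used. From \eqref{eq.s}, $\E{S_t|N_t=n,\Theta^{[1]},\Theta^{[2]}}=n\,\lambda^{[2]}\Theta^{[2]}$ and $\Var{S_t|N_t=n,\Theta^{[1]},\Theta^{[2]}}=n^2V_2\!\left(\lambda^{[2]}\Theta^{[2]}\right)\psi^{[2]}/n=n\,V_2\!\left(\lambda^{[2]}\Theta^{[2]}\right)\psi^{[2]}$, the crucial simplification being that the factor $n^2$ coming from $S_t=nM_t$ and the factor $1/n$ in the dispersion collapse to a single power of $n$; both identities also hold at $n=0$, so the degenerate case needs no separate treatment. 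Applying the conditional-variance decomposition over $N_t$ (given the random effects) and averaging over $N_t$ then gives $\E{S_t|\Theta^{[1]},\Theta^{[2]}}=\lambda^{[1]}\lambda^{[2]}\Theta^{[1]}\Theta^{[2]}$ and $\Var{S_t|\Theta^{[1]},\Theta^{[2]}}=\lambda^{[1]}\Theta^{[1]}V_2\!\left(\lambda^{[2]}\Theta^{[2]}\right)\psi^{[2]}+\left(\lambda^{[2]}\Theta^{[2]}\right)^2V_1\!\left(\lambda^{[1]}\Theta^{[1]}\right)\psi^{[1]}$.

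With these in hand, part~(i) follows from a second application of the law of total variance, now over the random effects: $\Var{S_t}=\E{\Var{S_t|\Theta^{[1]},\Theta^{[2]}}}+\Var{\E{S_t|\Theta^{[1]},\Theta^{[2]}}}$, whose first summand reproduces the two expectation terms in \eqref{eq.201} and whose second summand yields $\left(\lambda^{[1]}\lambda^{[2]}\right)^2\Var{\Theta^{[1]}\Theta^{[2]}}$. For the cross-time covariances in part~(ii) and the frequency covariance in part~(iii), conditional independence across $t$ makes the conditional covariance vanish, so that for $t_1\neq t_2$ each covariance collapses to $\cov{\E{A_{t_1}|\Theta^{[1]},\Theta^{[2]}}, \E{B_{t_2}|\Theta^{[1]},\Theta^{[2]}}}$; substituting the conditional means $\lambda^{[1]}\Theta^{[1]}$ and $\lambda^{[1]}\lambda^{[2]}\Theta^{[1]}\Theta^{[2]}$ and using $\E{\Theta^{[1]}}=\E{\Theta^{[2]}}=1$ gives the stated expressions.

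The one place where conditional independence offers no help is the same-time covariance $\cov{N_t,S_t}$, which I expect to be the main obstacle. Here I would split $\cov{N_t,S_t}=\E{\cov{N_t,S_t|\Theta^{[1]},\Theta^{[2]}}}+\cov{\E{N_t|\Theta^{[1]},\Theta^{[2]}}, \E{S_t|\Theta^{[1]},\Theta^{[2]}}}$, whose second term coincides with the cross-time computation. For the first term I would evaluate $\E{N_tS_t|\Theta^{[1]},\Theta^{[2]}}=\E{N_t^2M_t|\Theta^{[1]},\Theta^{[2]}}=\lambda^{[2]}\Theta^{[2]}\,\E{N_t^2|\Theta^{[1]},\Theta^{[2]}}$ by conditioning on $N_t$, and expand $\E{N_t^2|\Theta^{[1]},\Theta^{[2]}}=V_1\!\left(\lambda^{[1]}\Theta^{[1]}\right)\psi^{[1]}+\left(\lambda^{[1]}\Theta^{[1]}\right)^2$. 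The delicate point is that the squared-mean part cancels $\E{N_t|\Theta^{[1]},\Theta^{[2]}}\E{S_t|\Theta^{[1]},\Theta^{[2]}}$ exactly, leaving $\cov{N_t,S_t|\Theta^{[1]},\Theta^{[2]}}=\lambda^{[2]}\Theta^{[2]}V_1\!\left(\lambda^{[1]}\Theta^{[1]}\right)\psi^{[1]}$; taking its expectation contributes the additional term $\lambda^{[2]}\psi^{[1]}\E{V_1\!\left(\lambda^{[1]}\Theta^{[1]}\right)\Theta^{[2]}}$ that distinguishes the same-time covariance from its cross-time analogue.
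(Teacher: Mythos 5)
Your proposal is correct and follows essentially the same route as the paper's own proof: iterated conditioning on the random effects and on $N_t$, the law of total variance/covariance, the EDF identity $\Var{N_t\mid\Theta^{[1]},\Theta^{[2]}}=V_1\left(\lambda^{[1]}\Theta^{[1]}\right)\psi^{[1]}$, and the cancellation that isolates $\lambda^{[2]}\psi^{[1]}\E{V_1\left(\lambda^{[1]}\Theta^{[1]}\right)\Theta^{[2]}}$ in the same-time covariance. The only difference is the order of the tower (you condition on $\left(\Theta^{[1]},\Theta^{[2]}\right)$ first and then on $N_t$, whereas the paper conditions on $\left(N_t,\Theta^{[1]},\Theta^{[2]}\right)$ jointly in one step), which is immaterial.
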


While the variance formula in \eqref{eq.201} is complicated, the conditional expression in Proposition \ref{prop.1} can be further simplified for the most exponential family distribution of the severity, as the following example shows.


\begin{example}\label{ex.prop.1}
Under the Heterogeneous Model, 
consider the Poisson distribution for $F_1$ and Gamma distribution for $F_2$, with the log link functions for $\eta_1$ and $\eta_2$, respectively:
\[
     N_{t}\big\vert \left(\boldsymbol{x}^{[1]},\boldsymbol{x}^{[2]}, \theta^{[1]},
        \theta^{[2]}\right) \sim {\rm Pois}\left(\lambda^{[1]}\theta^{[1]}\right),
\]
\[
   M_{t}\big\vert \left(n_{t}, \boldsymbol{x}^{[1]},\boldsymbol{x}^{[2]}, \theta^{[1]},
        \theta^{[2]}\right) \iid {\rm Gamma} \left( \lambda^{[2]}\theta^{[2]}, \psi^{[2]}/n_{t}\right), \quad n_{t}>0,
\]
and
  \[
       \P{M_{t}=0\big\vert n_{t}, \theta^{[1]},\boldsymbol{x}^{[1]},
     \theta^{[2]},\boldsymbol{x}^{[2]}}=1, \quad  n_{t}=0,
  \]
where
\[
\lambda^{[1]}=\exp\left(\boldsymbol{x}^{[1]}\boldsymbol{\beta}^{[1]}\right) \quad\hbox{and}\quad
\lambda^{[2]}=\exp\left(\boldsymbol{x}^{[2]} \boldsymbol{\beta}^{[2]}\right).
\]

Furthermore, assume that
the Gaussian copula with correlation coefficient $\rho$ is the distribution of the residual effect $\left( \Theta^{[1]}, \Theta^{[2]}\right)$, with
${\rm Log Normal}(-\sigma_1^2/2, \sigma_1^2)$ and ${\rm Log Normal}(-\sigma_2^2/2, \sigma_2^2)$
as the marginal distributions of $\Theta^{[1]}$ and $\Theta^{[2]}$, respectively.
Then, from Proposition \ref{prop.1}, we have the following covariance formulas:
\[
\begin{cases}
  {\rm cov}\left(N_{t_1},N_{t_2}\bigg\vert \boldsymbol{x}^{[1]}, \boldsymbol{x}^{[2]}\right)=\left(\lambda^{[1]}\right)^2\left(\exp(\sigma_1^2)-1\right);\\
{\rm cov}\left(S_{t_1},S_{t_2}\bigg\vert \boldsymbol{x}^{[1]}, \boldsymbol{x}^{[2]}\right)
=\left(\lambda^{[1]}\lambda^{[2]}\right)^2 \left\{\exp(\sigma_1^2+\sigma_2^2+4\rho\sigma_1\sigma_2)-\exp(2\rho\sigma_1\sigma_2)\right\};\\
  {\rm cov}\left(N_{t_1},S_{t_2}\bigg\vert \boldsymbol{x}^{[1]}, \boldsymbol{x}^{[2]}\right)
=\left(\lambda^{[1]}\right)^2 \lambda^{[2]}\left\{\exp(\sigma_1^2+2\rho\sigma_1\sigma_2)-\exp(\rho\sigma_1\sigma_2)\right\} ;\\
\end{cases}
\]
for $t_1\neq t_2$, and
\[
  {\rm cov}\left(N_{t},S_{t}\bigg\vert \boldsymbol{x}^{[1]}, \boldsymbol{x}^{[2]}\right)
=\lambda^{[1]} \lambda^{[2]}\exp(\rho\sigma_1\sigma_2) +
\left(\lambda^{[1]}\right)^2 \lambda^{[2]}\left\{\exp(\sigma_1^2+2\rho\sigma_1\sigma_2)-\exp(\rho\sigma_1\sigma_2)\right\}.
\]
Furthermore, using the covariance formulas above and the variance formulas
\[
\begin{aligned}
\Var{N_{t}\bigg\vert \boldsymbol{x}^{[1]}, \boldsymbol{x}^{[2]}}
&= \left(\lambda^{[1]}\right)^2 \left\{\exp(\sigma_1^2)-1\right\} +\lambda^{[1]}
\end{aligned}
\]
and
\[
\begin{aligned}
\Var{S_{t}\bigg\vert \boldsymbol{x}^{[1]}, \boldsymbol{x}^{[2]}}
&= \lambda^{[1]}\left(\lambda^{[2]}\right)^2 \psi^{[2]}\exp(\sigma_2^2+2\rho\sigma_1\sigma_2) \\
&\quad\quad\quad+
\left(\lambda^{[1]}\lambda^{[2]}\right)^2 \left\{\exp(\sigma_1^2+\sigma_2^2+4\rho\sigma_1\sigma_2)-\exp(2\rho\sigma_1\sigma_2)\right\},
\end{aligned}
\]
we have the following conditional correlations:
\[
\begin{cases}
  {\rm corr}\left(N_{t_1},N_{t_2}\bigg\vert \boldsymbol{x}^{[1]}, \boldsymbol{x}^{[2]}\right)
=\frac{\lambda^{[1]}\left\{\exp(\sigma_1^2)-1\right\}}{1+\lambda^{[1]}\left\{\exp(\sigma_1^2)-1\right\}};\\
{\rm corr}\left(S_{t_1},S_{t_2}\bigg\vert \boldsymbol{x}^{[1]}, \boldsymbol{x}^{[2]}\right)
=\frac{
\lambda^{[1]} \left\{\exp(\sigma_1^2+\sigma_2^2+4\rho\sigma_1\sigma_2)-\exp(2\rho\sigma_1\sigma_2)\right\}
}{
 \psi^{[2]}\exp(\sigma_2^2+2\rho\sigma_1\sigma_2) +
\lambda^{[1]} \left\{\exp(\sigma_1^2+\sigma_2^2+4\rho\sigma_1\sigma_2)-\exp(2\rho\sigma_1\sigma_2)\right\}
};\\
{\rm corr}\left(N_{t_1},S_{t_2}\bigg\vert \boldsymbol{x}^{[1]}, \boldsymbol{x}^{[2]}\right)
=\frac{
\lambda^{[1]} \left\{\exp(\sigma_1^2+2\rho\sigma_1\sigma_2)-\exp(\rho\sigma_1\sigma_2)\right\}
}{\sqrt{1+\lambda^{[1]}\left\{\exp(\sigma_1^2)-1\right\}}
\sqrt{
\psi^{[2]}\exp(\sigma_2^2+2\rho\sigma_1\sigma_2) +
\lambda^{[1]} \left\{\exp(\sigma_1^2+\sigma_2^2+4\rho\sigma_1\sigma_2)-\exp(2\rho\sigma_1\sigma_2)\right\}
}
} ;\\
\end{cases}
\]
for $t_1\neq t_2$, and
\[
\begin{aligned}
&{\rm corr}\left(N_{t},S_{t}\bigg\vert \boldsymbol{x}^{[1]}, \boldsymbol{x}^{[2]}\right)\\
&=\frac{
 \exp(\rho\sigma_1\sigma_2) +
\lambda^{[1]} \left\{\exp(\sigma_1^2+2\rho\sigma_1\sigma_2)-\exp(\rho\sigma_1\sigma_2)\right\}
}{\sqrt{1+\lambda^{[1]}\left\{\exp(\sigma_1^2)-1\right\}}
\sqrt{
\psi^{[2]}\exp(\sigma_2^2+2\rho\sigma_1\sigma_2) +
\lambda^{[1]} \left\{\exp(\sigma_1^2+\sigma_2^2+4\rho\sigma_1\sigma_2)-\exp(2\rho\sigma_1\sigma_2)\right\}
}
}.
\end{aligned}
\]
%
%
\end{example}

\section{Optimal Relativities in a BMS Assuming Dependence}\label{sec.4}

Recent studies propose and study models considering serial dependence as well as cross sectional dependence between the claim frequency and severity \citep{Pinquet, Bastida, Czado, Czado2015, Frees4, Gomez,AhnValdez2, JKWoo}.
Naturally, all of these proposed models require the use of past experience of both the frequencies and severities for the posteriori classification.

While predictions using such dependence models are superior, their usage in practice can be limited under the classic BMS. The main difficulty is the discrepancy between the assumptions in the classical BMS and the dependent frequency-severity models: the classic BMS implicitly assumes independence between frequency and severity, while the aforementioned studies relax such dependence.
Specifically, the classical BMS only updates the claim frequency, but not severity, and the BMS premium is calculated as the product of the posterior frequency and average severity.
Consequently, the current BMS does not seem to have room for the use of the frequency-severity dependence information in the ratemaking system.
Due to this discrepancy, the dependence models in the literature adapted a general credibility system besides the classical BMS.
For example, \citet{Pinquet} proposes using the predictive mean of aggregate severity based on the Bayesian estimation method. As another example, to accommodate the frequency-severity dependence, \citet{Gomez} proposes modifying the BMS using a transition rule that distinguishes the size of the severity.

In this section, we show that through a simple modification of the optimal relativity $r_l$, the current BMS can properly accommodate the frequency-severity dependence.
Specifically, we propose the optimal relativities under a new objective function that directly incorporate the frequency-severity dependence.
Consequently, in the proposed method, we have the same transition rules as in the classical BMS, while including the modified relativities to account for the frequency-severity dependence.

\subsection{Structure of the Classical BMS}
A BMS is a premium adjustment mechanism widely used in the a posteriori ratemaking process to set the proper premium based on the policyholder's claim history. Apart from peripheral differences, all BMS consist of BMS levels, transition rules, and relativities. We assume that there are $z$ BM levels labeled from $1$ to $z$. We use the transition rule commonly known as the $-1/+h$ system, where
each reported claim is penalized by climbing $h$ BM levels per claim, while a claim-free year is rewarded by dropping one BM level.
Since the current BM level and the number of reported claims are enough to determine the BM level in the subsequent year, the process we describe for the BM levels is a Markov chain on a finite space.

We further denote the random variable $L$ as the BM level occupied by the randomly selected policyholder in the steady state. Since the BM levels under the classical BMS are determined not by the claim severity, but by the claim frequency, the stationary distribution $\P{L=l}$ for $l=1, \cdots, z$ is determined by the information on the frequency. Specifically, under the Heterogeneous Model, we have
\begin{equation*}
\begin{aligned}
\P{L=l}
&=\sum\limits_{\kappa \in\mathcal{K}} w_{\kappa} \int \pi_l\left(\lambda_{\kappa}^{[1]}\theta^{[1]}, \psi_{\kappa}^{[1]} \right) h(\theta^{[1]}, \theta^{[2]}){\rm d}\theta^{[1]}{\rm d}\theta^{[2]}\\
&=\sum\limits_{\kappa \in\mathcal{K}} w_{\kappa} \int \pi_l\left(\lambda_{\kappa}^{[1]}\theta^{[1]}, \psi^{[1]} \right) g_1(\theta^{[1]}){\rm d}\theta^{[1]}, \quad \hbox{for\quad$l=1, \cdots, {z}$},
\end{aligned}
\end{equation*}
where $\pi_l\left(\lambda_{\kappa}^{[1]}\theta^{[1]}, \psi^{[1]} \right)$ is the stationary distribution for the policyholder with $\Lambda^{[1]}=\lambda^{[1]}_{\kappa}$ and $\Theta^{[1]}=\theta^{[1]}$ to be at level $l$.
 Note that $\pi_l$ is a conditional stationary distribution; hence, it is free of risk characteristics for severity.

We denote the relativity associated with BM level $l$ as $r_l$. The policyholder at BM level $l$ pays the product of $r_l$ and the a priori premium, which is
predetermined based on the policyholder's characteristics. Typically, the relativity $r_l$ is determined as the solution of the optimization problem.
In the following subsection, we explain how to determine the relativity $r_l$ in the optimization problem setting.

\subsection{Determination of Optimal Relativities Assuming Dependence}

In this subsection, 
we discuss the procedure to determine the optimal relativities 
in the Heterogeneous Model.
Motivated by the following Bayesian premiums
\begin{equation*}
\E{S_{t+1}\big\vert \lambda^{[1]},\lambda^{[2]}, \mathcal{F}_t} =\lambda^{[1]} \lambda^{[2]} \E{\Theta^{[1]}\Theta^{[2]}\big\vert \lambda^{[1]},\lambda^{[2]},  \mathcal{F}_t},
\end{equation*}
we set the premium using
\[
\lambda^{[1]}\lambda^{[2]}r_l
\]
for a policyholder at BM level $l$ with the given a priori
\[
\left(\Lambda^{[1]}, \Lambda^{[2]}\right) = \left(\lambda^{[1]}, \lambda^{[2]}\right).
\]
Furthermore, the optimal relativities $r_1, \cdots,$ and $r_z$ are defined as the solution to the following optimization problem
 \begin{equation}\label{eq.67}
 \begin{aligned}
(r_1^{\rm Dep}, \cdots, r_z^{\rm Dep})&:=\argmin\limits_{(r_1, \cdots, r_{{z}})\in \Real^{{z}}}\E{\left(\E{S_{t+1}\big\vert \Lambda^{[1]},  \Lambda^{[2]}, \Theta^{[1]}, \Theta^{[2]}}-\Lambda^{[1]} \Lambda^{[2]} r_L\right)^2}\\
&=\argmin\limits_{(r_1, \cdots, r_{{z}})\in \Real^{{z}}}\E{ \left(\Lambda^{[1]} \Lambda^{[2]}\Theta^{[1]}\Theta^{[2]}- \Lambda^{[1]} \Lambda^{[2]} r_L\right)^2}\\
\end{aligned}
 \end{equation}
Note that similar to \citet{Chong}, the objective function in \eqref{eq.67} is interpreted as
the expected square difference between the aggregate severity and the premium determined by BMS, and can be analytically expressed as in the following theorem.

\begin{theorem}\label{thm.oh.1}
  Under the Heterogeneous Model, the solution to \eqref{eq.67} is
  \begin{equation}\label{eq.16}
r_l^{\rm Dep}:=\frac{\E{\left( \Lambda^{[1]}\Lambda^{[2]}\right)^2  \Theta^{[1]}\Theta^{[2]}\bigg\vert L=l}}{\E{\left(\Lambda^{[1]} \Lambda^{[2]}\right)^2\big\vert L=l}}
\quad\hbox{for}\quad l=1, \cdots,{z}
\end{equation}
where the numerator and denominator can be calculated by
\[
\E{\left( \Lambda^{[1]}\Lambda^{[2]}\right)^2 \Theta^{[1]}\Theta^{[2]}\bigg\vert L=l}=\frac{\sum\limits_{\kappa\in\mathcal{K}} w_{\kappa} \left(\lambda_{\kappa}^{[1]}\lambda_{\kappa}^{[2]}\right)^2  \int\int \theta^{[1]}\theta^{[2]}  \pi_l(\lambda_{\kappa}^{[1]}\theta^{[1]}) h(\theta^{[1]}, \theta^{[2]}){\rm d}\theta^{[1]}{\rm d} \theta^{[2]}}
{\sum\limits_{\kappa\in\mathcal{K}}w_{\kappa} \int  \pi_l(\lambda_{\kappa}^{[1]}\theta^{[1]}) g_1(\theta^{[1]}){\rm d}\theta^{[1]}  }
\]
and
\[
\E{\left( \Lambda^{[1]}\Lambda^{[2]}\right)^2\big\vert L=l} = \frac{\sum\limits_{\kappa\in\mathcal{K}} w_{\kappa} \left(\lambda_{\kappa}^{[1]}\lambda_{\kappa}^{[2]}\right)^2  \int \pi_l(\lambda_{\kappa}^{[1]}\theta^{[1]})g_1(\theta^{[1]}){\rm d}\theta^{[1]}}
{\sum\limits_{\kappa\in\mathcal{K}} w_{\kappa}  \int \pi_l(\lambda_{\kappa}^{[1]}\theta^{[1]}) g_1(\theta^{[1]}){\rm d}\theta^{[1]} },
\]
respectively.
\end{theorem}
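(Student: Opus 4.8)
The plan is to read \eqref{eq.67} as a weighted least-squares problem whose decision variables decouple across BM levels. Abbreviating $W := \Lambda^{[1]}\Lambda^{[2]}$ and $\Theta := \Theta^{[1]}\Theta^{[2]}$, the objective in the second line of \eqref{eq.67} becomes $\E{W^2(\Theta - r_L)^2}$. Conditioning on the occupied level $L$ and applying the tower property rewrites this as $\sum_{l=1}^{z}\P{L=l}\,\E{W^2(\Theta-r_l)^2 \mid L=l}$. Since $r_L$ equals the constant $r_l$ exactly on the event $\{L=l\}$, the variable $r_l$ enters only the $l$-th summand, so the joint minimization over $\Real^{z}$ splits into $z$ independent scalar minimizations.

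For fixed $l$, the $l$-th summand is a convex quadratic in $r_l$; differentiating (interchange of derivative and expectation being justified by the quadratic form) and setting the derivative to zero gives the first-order condition $\E{W^2\Theta \mid L=l} = r_l\,\E{W^2 \mid L=l}$, so that $r_l = \E{W^2\Theta \mid L=l}\big/\E{W^2 \mid L=l}$. Strict positivity of $\E{W^2 \mid L=l}$ (guaranteed since the a priori premiums are positive) makes the quadratic strictly convex, so this stationary point is the unique global minimizer. Restoring the original notation yields \eqref{eq.16} directly. This is the conceptually routine step: it is just the projection characterization of a weighted conditional mean.

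The more delicate task is to verify the explicit integral representations of the numerator and denominator. Here I would invoke the law of total expectation, conditioning jointly on the risk class $\kappa$—which fixes $\Lambda^{[1]}=\lambda_\kappa^{[1]}$ and $\Lambda^{[2]}=\lambda_\kappa^{[2]}$ with weight $w_\kappa$ from \eqref{eq.1}—and on the random effects $(\Theta^{[1]},\Theta^{[2]})=(\theta^{[1]},\theta^{[2]})$, which \eqref{eq.4} makes independent of the observed characteristics. The key observation is that the BM level is governed solely by the claim-frequency history, so the steady-state conditional probability of occupying level $l$ is the stationary probability $\pi_l(\lambda_\kappa^{[1]}\theta^{[1]})$, a function of $\theta^{[1]}$ alone. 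Writing $\E{W^2\Theta\,\I{L=l}}$ and $\E{W^2\,\I{L=l}}$ through this decomposition and dividing by $\P{L=l}$ reproduces the two displayed expressions. The one point requiring care—and the main obstacle I anticipate—is purely bookkeeping: because $\pi_l$ is free of $\theta^{[2]}$, the inner $\theta^{[2]}$-integral in the denominator collapses the joint density $h(\theta^{[1]},\theta^{[2]})$ to its marginal $g_1(\theta^{[1]})$, whereas in the numerator the factor $\theta^{[2]}$ prevents this collapse. The argument therefore hinges on applying the marginalization over $\theta^{[2]}$ exactly where $\pi_l$ is $\theta^{[2]}$-free and nowhere else, while keeping the independence of $(\boldsymbol{X},\Theta)$ and the risk-class weighting straight.
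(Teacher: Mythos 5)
Your proposal is correct and follows essentially the same route as the paper: the optimal $r_l$ is obtained from the first-order condition of the level-wise decoupled quadratic, and the integral formulas follow by conditioning on the risk class and the random effects, using the independence of observed and unobserved characteristics and the fact that the stationary probability $\pi_l$ depends only on $\lambda_\kappa^{[1]}\theta^{[1]}$ (so the $\theta^{[2]}$-integral collapses $h$ to $g_1$ in the denominator but not in the numerator). The only cosmetic difference is that you write the conditional expectations as $\E{\cdot\,\I{L=l}}/\P{L=l}$ whereas the paper conditions on $\Lambda$ given $L=l$ via Bayes' rule; these are the same computation.
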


%

The following corollary shows
the optimal relativity under the Heterogeneous Model when the frequency and severity are assumed to be independent.
\begin{corollary}\label{cor.1}
  Under the Heterogeneous Model, when $\Theta^{[1]}$ and $\Theta^{[2]}$ are independent\footnote{The copula in \eqref{eq.4} is an independent copula}, the solution to \eqref{eq.67} is \begin{equation}\label{eq.1600}
r_l^{\rm Indep}:=\frac{\E{\left( \Lambda^{[1]}\Lambda^{[2]}\right)^2  \Theta^{[1]}\bigg\vert L=l}}{\E{\left(\Lambda^{[1]} \Lambda^{[2]}\right)^2\big\vert L=l}}
\quad\hbox{for}\quad l=1, \cdots,{z},
\end{equation}
where the numerator and denominator can be calculated by
\[
\E{\left( \Lambda^{[1]}\Lambda^{[2]}\right)^2 \Theta^{[1]}\bigg\vert L=l}=\frac{\sum\limits_{\kappa\in\mathcal{K}} w_{\kappa} \left(\lambda_{\kappa}^{[1]}\lambda_{\kappa}^{[2]}\right)^2  \int \theta^{[1]}  \pi_l(\lambda_{\kappa}^{[1]}\theta^{[1]}) g_1(\theta^{[1]}){\rm d}\theta^{[1]}}
{\sum\limits_{\kappa\in\mathcal{K}}w_{\kappa} \int  \pi_l(\lambda_{\kappa}^{[1]}\theta^{[1]}) g_1(\theta^{[1]}){\rm d}\theta^{[1]}  }
\]
and
\[
\E{\left( \Lambda^{[1]}\Lambda^{[2]}\right)^2\big\vert L=l} = \frac{\sum\limits_{\kappa\in\mathcal{K}} w_{\kappa} \left(\lambda_{\kappa}^{[1]}\lambda_{\kappa}^{[2]}\right)^2  \int \pi_l(\lambda_{\kappa}^{[1]}\theta^{[1]})g_1(\theta^{[1]}){\rm d}\theta^{[1]}}
{\sum\limits_{\kappa\in\mathcal{K}} w_{\kappa}  \int \pi_l(\lambda_{\kappa}^{[1]}\theta^{[1]}) g_1(\theta^{[1]}){\rm d}\theta^{[1]} },
\]
respectively.
\end{corollary}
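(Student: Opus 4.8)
The plan is to obtain Corollary \ref{cor.1} as a direct specialization of Theorem \ref{thm.oh.1} to the case in which the copula $C$ in \eqref{eq.4} is the independence copula, so that $h(\theta^{[1]}, \theta^{[2]}) = g_1(\theta^{[1]})\,g_2(\theta^{[2]})$. Since the optimization problem \eqref{eq.67} and its minimizer \eqref{eq.16} were already established in Theorem \ref{thm.oh.1} without any assumption on the dependence structure of $\left(\Theta^{[1]}, \Theta^{[2]}\right)$, I only need to simplify the conditional expectation in the numerator of \eqref{eq.16} under independence; the denominator is carried over unchanged because it does not involve $\Theta^{[2]}$ at all.

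First I would argue at the level of conditional expectations. The BM level $L$ is a deterministic function of the realized frequency history, and hence of $\Lambda^{[1]}$, $\Theta^{[1]}$, and the reported claim counts; because severity never enters the transition rule, $L$ carries no dependence on $\Theta^{[2]}$. Under the independence copula, $\Theta^{[2]}$ is independent of $\left(\Theta^{[1]}, \boldsymbol{X}^{[1]}, \boldsymbol{X}^{[2]}\right)$, and therefore of $\left(\Lambda^{[1]}, \Lambda^{[2]}, L\right)$. Pulling the independent factor $\Theta^{[2]}$ out of the conditional expectation and invoking the identification constraint $\E{\Theta^{[2]}} = 1$ gives
\[
\E{\left( \Lambda^{[1]}\Lambda^{[2]}\right)^2 \Theta^{[1]}\Theta^{[2]}\big\vert L=l} = \E{\Theta^{[2]}}\,\E{\left( \Lambda^{[1]}\Lambda^{[2]}\right)^2 \Theta^{[1]}\big\vert L=l} = \E{\left( \Lambda^{[1]}\Lambda^{[2]}\right)^2 \Theta^{[1]}\big\vert L=l},
\]
so that the general relativity \eqref{eq.16} collapses to \eqref{eq.1600}.

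To verify the explicit integral expressions, I would rerun the computation underlying the Theorem \ref{thm.oh.1} formulas, but substitute $h = g_1 g_2$ directly into the double integral. In the numerator the integral factorizes as $\left(\int \theta^{[1]} \pi_l(\lambda_{\kappa}^{[1]}\theta^{[1]}) g_1(\theta^{[1]}){\rm d}\theta^{[1]}\right)\left(\int \theta^{[2]} g_2(\theta^{[2]}){\rm d}\theta^{[2]}\right)$, and the second factor equals $\E{\Theta^{[2]}} = 1$; this produces exactly the numerator stated in \eqref{eq.1600}. The denominator is verbatim that of Theorem \ref{thm.oh.1}, since the weights $w_\kappa$, the a priori means $\lambda_{\kappa}^{[1]}$, $\lambda_{\kappa}^{[2]}$, and the conditional stationary probabilities $\pi_l$ all involve only $\theta^{[1]}$.

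The one point that requires care, and the crux of the reduction, is that the conditional stationary distribution $\pi_l(\lambda_{\kappa}^{[1]}\theta^{[1]})$ depends on the frequency random effect $\theta^{[1]}$ alone and is free of $\theta^{[2]}$, as noted after its introduction in Section \ref{sec.4}. This is precisely what permits the $\theta^{[2]}$ integral to separate and evaluate to one under independence; were $\pi_l$ to depend on $\theta^{[2]}$, the factorization would fail and \eqref{eq.16} could not be reduced to \eqref{eq.1600}. Everything else is routine factorization of the joint density, so I do not anticipate any genuine obstacle beyond bookkeeping.
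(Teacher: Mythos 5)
Your proposal is correct and follows essentially the same route as the paper, whose proof of Corollary \ref{cor.1} simply notes that the argument of Theorem \ref{thm.oh.1} goes through with $h(\theta^{[1]},\theta^{[2]})=g_1(\theta^{[1]})g_2(\theta^{[2]})$; your factorization of the double integral together with $\E{\Theta^{[2]}}=1$ is exactly the detail the paper leaves implicit.
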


Apart from peripheral differences, the classical BMS literature defines the optimal relativity $r_l$s as the solution to the optimization problem for the frequency only.
For example, \citet{Chong} defines the optimal relativity as
\begin{equation*}
\begin{aligned}
(r_1^{\rm Tan}, \cdots, r_{{z}}^{\rm Tan})&:=\argmin_{(r_1, \cdots, r_{{z}})\in \Real^{{z}}} \E{\left(\E{N_{t+1}\big\vert \Lambda^{[1]}, \Theta^{[1]}}-r_L\Lambda^{[1]}\right)^2}\\
&=\argmin_{(r_1, \cdots, r_{{z}})\in \Real^{{z}}} \E{\left(\Lambda^{[1]}\Theta^{[1]}-\Lambda^{[1]} r_L\right)^2},\\
\end{aligned}
\end{equation*}
which has the following analytical expression
\begin{equation}\label{eq.60}
r_l^{\rm Tan}=\frac{\E{\left(\Lambda^{[1]}\right)^2 \Theta^{[1]}\big\vert L=l}}{\E{\left(\Lambda^{[1]}\right)^2\big\vert L=l}}, \quad l=1, \cdots, {z}.
\end{equation}

\begin{remark}
When $\boldsymbol{X}^{[1]}$ and $\boldsymbol{X}^{[2]}$ are independent,
the optimal relativity in \eqref{eq.1600} is the same as the optimal relativity in \eqref{eq.60}.
However, if $\boldsymbol{X}^{[1]}$ and $\boldsymbol{X}^{[2]}$ are not independent, which is generally true if $\boldsymbol{x}^{[1]}$ and $\boldsymbol{x}^{[2]}$ share a common element, then
the optimal relativity in \eqref{eq.1600} is not the same as the optimal relativity in \eqref{eq.60}.
\end{remark}

\section{Data Analysis}\label{sec.6}

To examine the effect of dependence in BM scales, we use a data set for collision coverage for new and old vehicles from the Wisconsin Local Government Property Insurance Fund (LGPIF) in \citet{Frees4}. Detailed information on the project is available on the LGPIF project website \citep{LGPIF}.
 The LGPIF provides
property insurance\footnote{LGPIF coverage is categorized into three major types: building and contents, inland marine, and motor vehicles.} for various governmental entities, including counties, cities, towns, villages, school
districts, fire departments, and other miscellaneous entities.
Collision coverage provides coverage for the impact of a vehicle with an object, impact of vehicle with an attached vehicle, or the overturn of a vehicle.
The longitudinal data from 1,234 local government entities covers from 2006 to 2010. We removed the observations for policyholders whose new collision coverage and old collision coverage are zero. Hence, we use longitudinal data from 497 governmental entities in our data analysis.
We have two categorical variables: the entity type with six levels, miscellaneous, city, county, school, town and village, and average, and coverage with 3 levels, coverage 1 $\in (0, 0.14]$, coverage 2 $\in (0.14,0.74]$, and coverage 3 $\in (0.74,\infty]$.
Table \ref{tab.x} describes the distribution of the types of local government entities.
Furthermore, Table \ref{tab.n} summarizes the claim frequency for the policy years of 2006 to 2010. This table suggests that the claims distribution appears to be stable over time.
Tables \ref{entity_year} and \ref{coverage_year} provide average frequency and severity by entity type for 2006 to 2010, and by coverage and years, respectively.



In the following subsection, we specify the Heterogeneous Model for the data analysis.

\subsection{Model Specification}\label{sec.6.1}

We use the entity type and coverage, which are the categorical variables, as explanatory variables in $\lambda_i^{[1]}$ and $\lambda_i^{[2]}$. Specifically, with log link functions, we have
\[
\lambda_i^{[1]}=\exp(\boldsymbol{x}_{i}^{[1]}\boldsymbol{\beta}^{[1]}), \quad\hbox{and}\quad
\lambda_i^{[2]}=\exp(\boldsymbol{x}_{i}^{[2]}\boldsymbol{\beta}^{[2]}),
\]
where $\boldsymbol{x}_i^{[1]}$ and $\boldsymbol{x}_i^{[2]}$ are the model matrices\footnote{In this example, we set $\boldsymbol{x}_i^{[1]}=\boldsymbol{x}_i^{[2]}$.} of two categorical variables for frequency and severity, and $\boldsymbol{\beta}^{[1]}$ and $\boldsymbol{\beta}^{[2]}$ are the corresponding parameters.
Assuming the frequency-severity dependence as in the Heterogeneous Model with
\begin{equation}\label{eq.pri.1}
\left(\Theta_i^{[1]},\Theta_i^{[2]}\right) \sim H=C_\rho(G_1, G_2), \quad i=1, \cdots, I,
\end{equation}
where we assume $C_\rho$ is the bivariate Gaussian copula with correlation parameter $\rho$, and marginals are given by
\[
\Theta_i^{[1]}\sim {\rm Log}{\rm Normal}\left(-\sigma_1^2/2, \sigma_1^2\right)\quad\hbox{and}\quad
\Theta_i^{[2]}\sim {\rm Log Normal}\left(-\sigma_2^2/2, \sigma_2^2\right).
\]

%

For the prior distributions of the parameters, we assume independence among the parameters
($\boldsymbol{\beta}^{[1]}$, $\boldsymbol{\beta}^{[2]}$, $\psi^{[2]}$, $\sigma_1$, $\sigma_2$, $\rho$). We use a diffuse normal prior for $\boldsymbol{\beta}^{[1]}$ and $\boldsymbol{\beta}^{[2]} $; that is,
$\boldsymbol{\beta}^{[1]} \sim {\rm MVN}(\boldsymbol{a}_0^{[1]},\boldsymbol{A}_0^{[1]})$, and
$\boldsymbol{\beta}^{[2]} \sim {\rm MVN}(\boldsymbol{a}_0^{[2]},\boldsymbol{A}_0^{[2]})$,
where
$\boldsymbol{a}_0^{[1]}$ and $\boldsymbol{a}_0^{[2]}\in\Real^8$ are the mean vectors and
$\boldsymbol{A}_0^{[1]}$ and $\boldsymbol{A}_0^{[2]}\in \Real^8\times \Real^8$ are the covariance matrices.
We choose
$1/\psi^{[2]} \sim{\rm Gamma}(c_0,d_0)$,
$\sigma_1 \sim {\rm Uniform}(e_0,f_0)$,
$\sigma_2 \sim {\rm Uniform}(g_0,h_0)$, and
$\rho \sim {\rm Uniform}(-1,1)$,
where
$(c_0, d_0,e_0,f_0,g_0,h_0)$
are pre-specified hyper parameters.

From the Heterogeneous Model with the specified bivariate distribution in \eqref{eq.pri.1} and priors, the joint posterior distribution of the parameters
$(\boldsymbol{\beta}^{[1]},\boldsymbol{\beta}^{[2]},\psi^{[2]},\sigma_1,\sigma_2,\rho)$ is
\begin{equation*}
\begin{aligned}
\left\{ L_0 L_1\right\}& \prod_{i=1}^I c_\rho\left(G_1(\Theta_i^{[1]}), G_2(\Theta_i^{[2]})\right)g_1\left(\Theta_i^{[1]}\right)g_2\left(\Theta_i^{[2]}\right)  \\
& \cdot \pi(\boldsymbol{\beta}^{[1]}|\boldsymbol{a}_0^{[1]},\boldsymbol{A}_0^{[1]}) \pi(\boldsymbol{\beta}^{[2]}|\boldsymbol{a}_0^{[2]},\boldsymbol{A}_0^{[2]}) \pi(1/\psi^{[2]}|c_0,d_0) \pi(\sigma_1|e_0,f_0) \pi(\sigma_2|g_0,h_0) \pi(\rho),
\end{aligned}
\end{equation*}
where $L_0$ is the likelihood function for $N_{it}$
 having zero frequencies, while
 $L_1$ is the likelihood function for $N_{it}$ and $S_{it}$ having non-zero frequencies; specifically,
\begin{equation*}
L_0=\prod_{\{(i,t):N_{it}=0\}} f_1\left( N_{it}; \lambda_i^{[1]}\Theta_i^{[1]}\right),
\end{equation*}
and
\begin{equation*}
L_1= \prod_{\{(i,t):N_{it}>0\}} f_1\left(N_{it}; \lambda_i^{[1]}\Theta_i^{[1]} \right) f_2\left(S_{it}; N_{it}\lambda_i^{[2]}\Theta_i^{[2]}, \psi^{[2]}/N_{it} \right).
\end{equation*}
 Here, $c_\rho(u_1, u_2)$ is the density of the Gaussian copula with the correlation parameter $\rho$. 

For convenience, we call the Heterogeneous Model specified in this section the {\it dependent Heterogeneous Model} for $\rho\neq 0$. On the other hand, the model with the restriction of $\rho=0$ is the {\it independent Heterogeneous Model}.

\subsection{Estimation results based on the Heterogeneous Model}\label{sec.6.2}

Table \ref{jags_result} presents the summary statistics of the sample from the posterior distribution for each parameter with the Bayesian approach by using JAGS \citep{JAGS} to estimate the parameters of the dependent and independent Heterogeneous Models. We use 60,000 MCMC iterations and save every 5th sample after a burn-in of 20,000 iterations. The standard MCMC diagnostics gave no indication of a lack of convergence. We display the following three measures in the table: posterior median (median), posterior standard error (std.error), and 95$\%$ credible interval (95$\%$ CI). For a Bayesian credible interval, we use the 95 $\%$ highest posterior density (HPD) credible interval.
Note that * indicates that the parameters are significant at the 0.05 level.
Moreover, with the posterior samples, the deviance information criteria (DIC, \citet{Spiegelhalter2002}) for dependent and independent Heterogeneous Models are given in the bottom row of Table \ref{jags_result}.
Between the two models, the dependent Heterogeneous Model is preferred as it has the smallest DIC.

\begin{table}[h!]
\caption{Summary statistics of 60,000 posterior samples for the model parameters }
\centering
\begin{tabular}{ l r r r r l r r r r l c c }
 \hline
 &\multicolumn{5}{l}{Dependent}&\multicolumn{5}{l}{Independent}\\ \cline{2-5} \cline{7-10}

 &&&\multicolumn{2}{c}{95$\%$ CI}& &&&\multicolumn{2}{c}{95$\%$ CI}\\
parameter& median & std.error & lower & upper& & median & std.error & lower & upper \\
 \hline
 \multicolumn{11}{l}{ Frequency} \\
\quad Intercept &	-2.767& 	0.318&  -3.417&  -2.153&*& -2.761& 	0.319&  -3.393&  -2.142&*\\
\quad City      & 	 0.597& 	0.337&  -0.051& 	1.272& &  0.574& 	0.349&  -0.111& 	1.251\\
\quad County    &  1.907& 	0.335& 	1.271& 	2.587&*&	 1.877& 	0.345& 	1.206& 	2.551&*\\
\quad School    &	 0.411& 	0.304&  -0.181& 	1.014& &	 0.405& 	0.310&  -0.210& 	1.009\\
\quad Town      &	-1.351& 	0.384&  -2.103&  -0.584&*& -1.386& 	0.384&  -2.150&  -0.638&*\\
\quad Village   &	-0.012& 	0.323&  -0.626& 	0.654& & -0.034& 	0.332&  -0.716& 	0.599\\
\quad Coverage2 &	 1.247& 	0.212& 	0.829& 	1.667&*&  1.245& 	0.215& 	0.835& 	1.665&*\\
\quad Coverage3 &  2.139& 	0.230& 	1.713& 	2.615&*& 	 2.154& 	0.231& 	1.708& 	2.616&*\\
\hline
 \multicolumn{11}{l}{ Severity} \\

\quad Intercept &  	 8.829& 	0.375& 	8.103& 	9.588&*& 	 8.633& 	0.368& 	7.903& 	9.335&*\\
\quad City   	  & 		-0.036& 	0.353&  -0.737& 	0.637& & -0.053& 	0.345&  -0.728& 	0.631\\
\quad County 	  & 		 0.341& 	0.338&  -0.336& 	0.980& & 	 0.394& 	0.338&  -0.266& 	1.058\\
\quad School 	  & 		-0.173& 	0.328&  -0.805& 	0.484& & -0.155& 	0.326&  -0.792& 	0.487\\
\quad Town  	  &		 0.497& 	0.440&  -0.356& 	1.349& &  0.438& 	0.447&  -0.441& 	1.310\\
\quad Village   &		 0.316& 	0.346&  -0.357& 	0.994& & 	 0.289& 	0.348&  -0.406& 	0.951\\
\quad Coverage2 &		 0.180& 	0.244&  -0.308& 	0.646& & 	 0.159& 	0.243&  -0.308& 	0.640\\
\quad Coverage3 & 		-0.027& 	0.261&  -0.533& 	0.493& &  0.022& 	0.258&  -0.468& 	0.533\\
\quad $1/\psi^{[2]}$&   0.670& 	0.041& 	0.592& 	0.752&*& 	 0.678& 	0.041& 	0.600& 	0.761&*\\
\hline
\multicolumn{11}{l}{ Copula part} \\

\quad $\sigma^2_1$&  	 0.992& 	0.142& 	0.746& 	1.292&*& 	0.994& 	0.140& 	0.741& 	1.283&*\\
\quad $\sigma^2_2$&  	 0.293& 	0.067& 	0.176& 	0.433&*& 	0.297& 	0.065& 	0.181& 	0.432&*\\
\quad $\rho$&       	-0.447& 	0.130&  -0.690&  -0.190&*\\
 \hline
 DIC & \multicolumn{5}{c}{16530.76} & \multicolumn{5}{c}{16618.01} \\
 \hline
\end{tabular}
\label{jags_result}
\end{table}
The estimation results from the comparison of the DICs of the two models and the $95\%$ HPD interval indicates that
the estimate for the correlation coefficient between two random effects $\rho=-0.449$ is significantly negative.

Figure \ref{totalS} presents the posterior predictive distributions of the total severities for next year obtained from the posterior samples under the dependence model (solid line) and the independence model (dashed line).
Specifically, the total severity is the sum of all severities over policyholders, ${\rm Total}~S_{t+1} = \sum_{i=1}^{I} S_{i,t+1}$, indicating the amount of severity that the insurance company could be interested in.
Here, the dotted line indicates the total aggregate severity for the 2011 held data with 1,098 observations from 379 local government entities.
As Figure \ref{totalS} shows, the negative dependence makes the severities more left-skewed.


\begin{figure}[h!]
\centering
\includegraphics[width=\textwidth, height=2.2in]{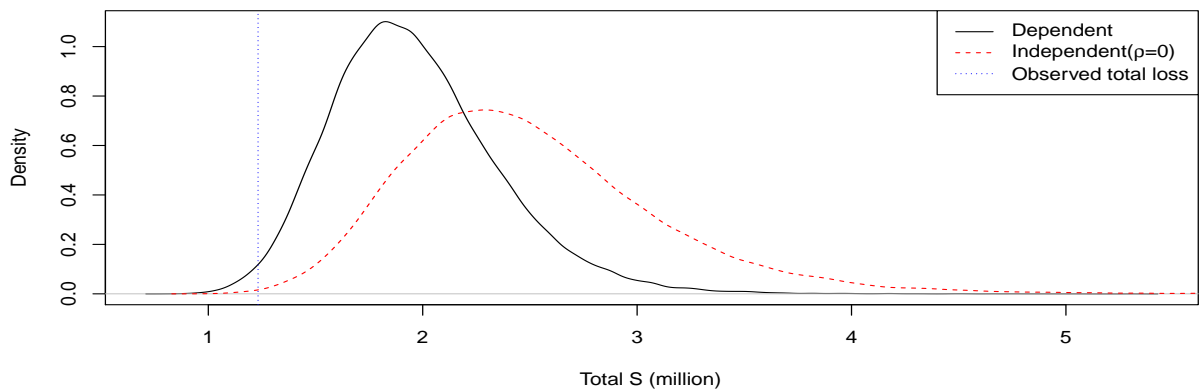}
\caption{ Posterior predictive distribution of total sum of severities   }
\label{totalS}
\end{figure}

Moreover, based on the estimation results, we can examine
the dependence between frequency and severity, between frequencies, and between severities:
 \begin{equation*}
{\rm corr}\left(N_{t},S_{t}\big\vert \boldsymbol{x}^{[1]}, \boldsymbol{x}^{[2]}\right), \,
{\rm corr}\left(N_{t_1},N_{t_2}\big\vert \boldsymbol{x}^{[1]}, \boldsymbol{x}^{[2]}\right)
,\, \hbox{and} \,\,\,{\rm corr}\left(S_{t_1},S_{t_2}\big\vert \boldsymbol{x}^{[1]}, \boldsymbol{x}^{[2]}\right)
\end{equation*}
which we calculate in Example \ref{ex.prop.1}.
Figure \ref{corrNS_corrSS} shows the posterior predictive distribution of the correlation coefficients for each group.
There are 18 groups depending on the a priori characteristics, entity type, and coverage level.
Each entity type has three boxplots (white, light gray, gray), which represent coverage 1, coverage 2, and coverage 3, respectively.

In Figure \ref{corrNS_corrSS} (a), the red boxplots represent the distribution of the correlation between frequency and aggregate severity under the independent Heterogeneous Model, which we can use as a reference to assess the effect of dependence in the dependent Heterogeneous Model.
In terms of the correlation between frequency and aggregate severity, we cannot use a correlation coefficient of zero, $\rho=0$, as a reference due to the inherited dependence between frequency and aggregate severity, as we show in Example \ref{ex.prop.1}.

The box plots of all groups in Figure \ref{corrNS_corrSS} (a) indicate significant negative dependence between frequency and aggregate severity.
Figure \ref{corrNS_corrSS} (b) and (c) show the distribution of the correlation between frequencies, and the distribution between severities, respectively.
Compared with the frequency-frequency dependence, severity-severity dependence is weaker on average.
Moreover, among the three types of dependence in Figure \ref{corrNS_corrSS}, the discrepancy in the distribution of correlation depending on the risk groups tend to be smaller in frequency-severity dependence.

\begin{figure}[p]
    \centering
        \subfloat[]{\includegraphics[width=\textwidth, height=2.4in]{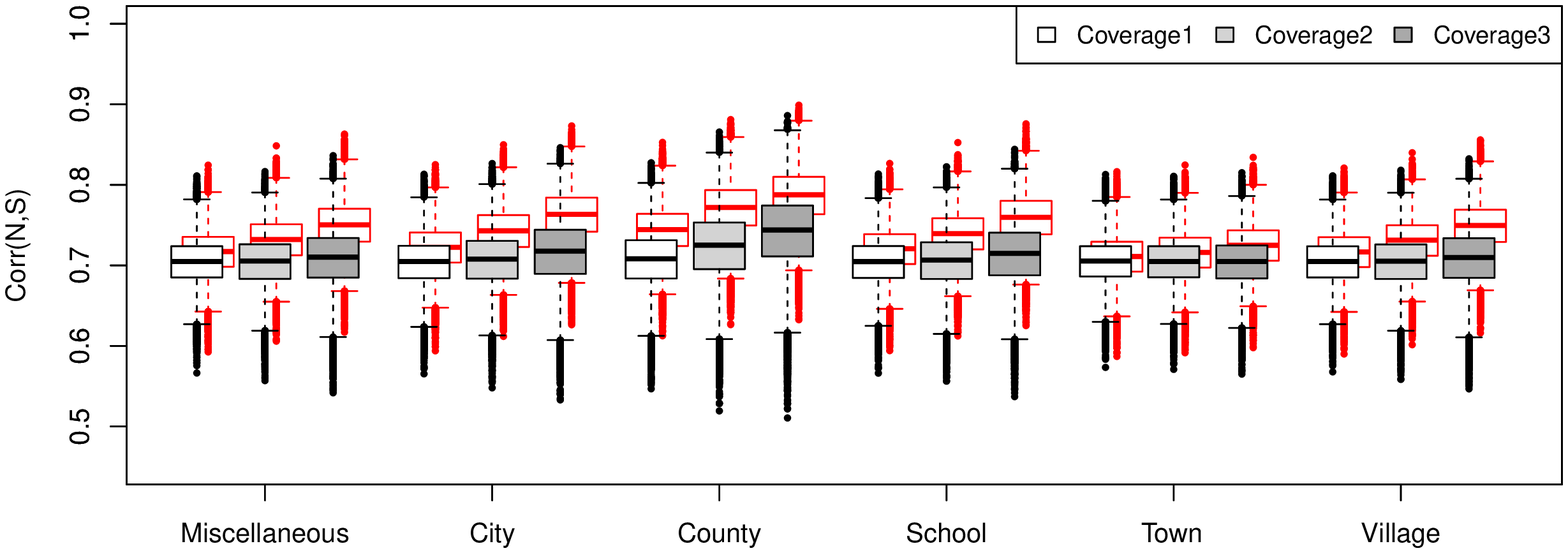}}

        \subfloat[]{\includegraphics[width=\textwidth, height=2.4in]{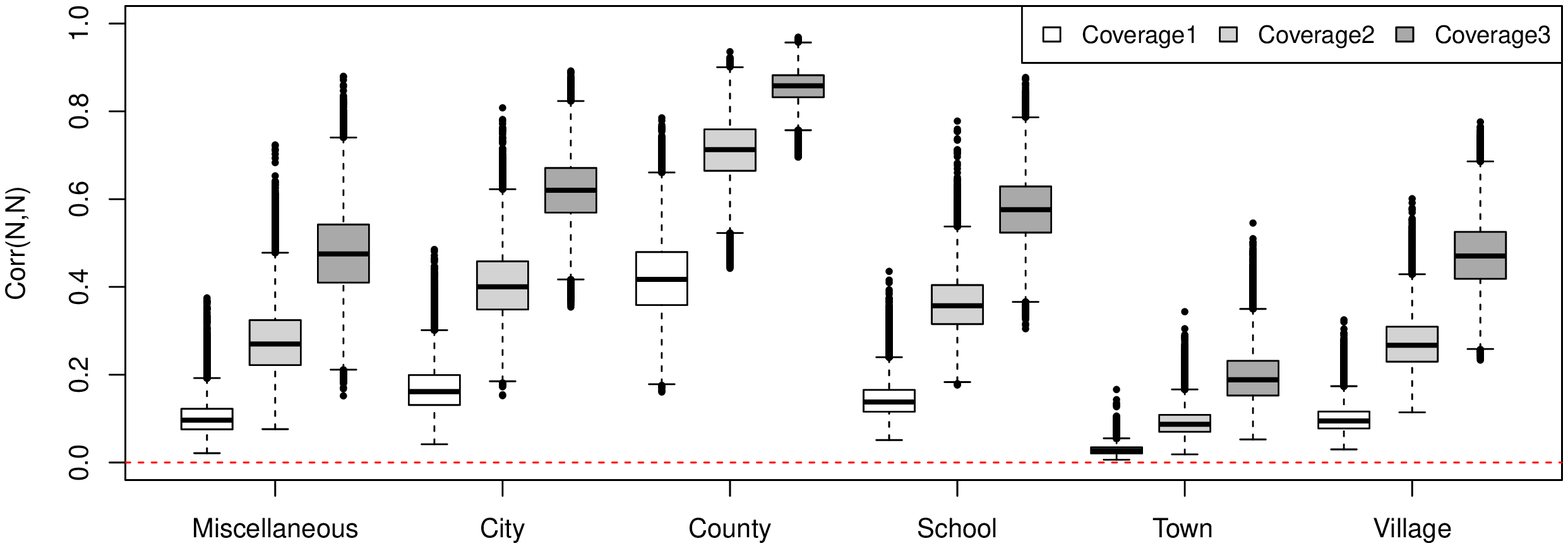}}

        \subfloat[]{\includegraphics[width=\textwidth, height=2.4in]{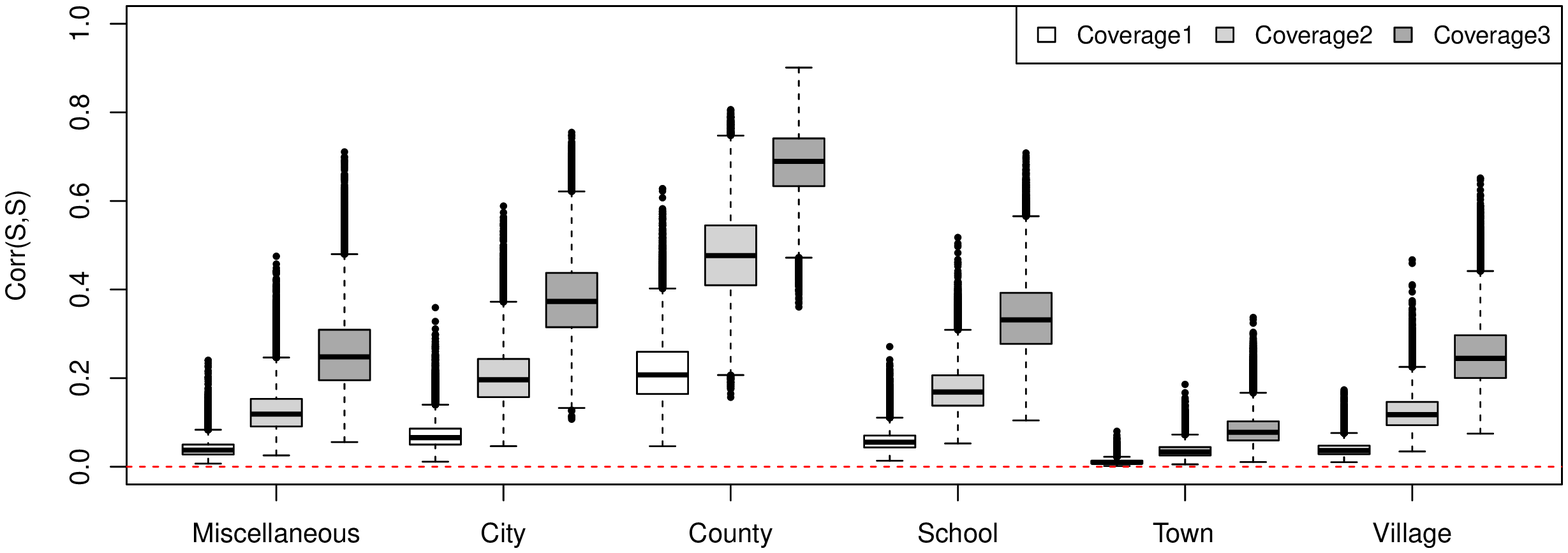}}

\caption{ Box plot of correlations (a) between frequency and aggregate severity, (b) between frequencies, and (c) between aggregate severities from the posterior samples. Red represents the criteria of independence.}
\label{corrNS_corrSS}
\end{figure}

\begin{landscape}
\begin{table}
\caption{Distribution of $L$ and various BM relativities under various transition rules}
\centering
\begin{tabular}{ l c c c g c c c c g c c c c g}
 \hline
 	& \multicolumn{4}{l}{-1/+1 system} & & \multicolumn{4}{l}{-1/+2 system}& & \multicolumn{4}{l}{-1/+3 system} \\ \cline{2-5} \cline{7-10}\cline{12-15}
	& \multicolumn{3}{l}{Relativity} &  \multirow{1}{*}{$P(L=l)$}  & & \multicolumn{3}{l}{Relativity} & \multirow{1}{*}{$P(L=l)$}& & \multicolumn{3}{l}{Relativity} & \multirow{1}{*}{$P(L=l)$}  \\
 Level $l$ 	& $r_l^{\rm Dep}$ & $r_l^{\rm Indep}$ & $r_l^{\rm Tan}$ & & & $r_l^{\rm Dep}$ & $r_l^{\rm Indep}$ & $r_l^{\rm Tan}$ & & & $r_l^{\rm Dep}$ & $r_l^{\rm Indep}$ & $r_l^{\rm Tan}$& \\
 \hline
10&	0.966&	1.278&	1.296&	0.145&&	0.932&	1.225&	1.241&	0.178	&&	0.920	&	1.207	&	1.221&	0.198\\
9&	0.454&	0.446&	0.483&	0.034&&	0.428&	0.415&	0.452&	0.051	&&	0.419	&	0.405	&	0.441&	0.064\\
8&	0.369&	0.337&	0.380&	0.018&&	0.343&	0.309&	0.352&	0.030	&&	0.336	&	0.301	&	0.343&	0.039\\
7&	0.327&	0.288&	0.335&	0.013&&	0.302&	0.260&	0.306&	0.023	&&	0.295	&	0.253	&	0.298&	0.033\\
6&	0.302&	0.259&	0.308&	0.012&&	0.278&	0.234&	0.282&	0.020	&&	0.273	&	0.228	&	0.274&	0.028\\
5&	0.284&	0.239&	0.290&	0.013&&	0.262&	0.216&	0.263&	0.024	&&	0.260	&	0.214	&	0.259&	0.025\\
4&	0.271&	0.224&	0.275&	0.017&&	0.253&	0.206&	0.253&	0.024	&&	0.249	&	0.201	&	0.242&	0.048\\
3&	0.259&	0.212&	0.262&	0.029&&	0.245&	0.197&	0.240&	0.055	&&	0.245	&	0.197	&	0.236&	0.041\\
2&	0.250&	0.203&	0.249&	0.074&&	0.243&	0.195&	0.236&	0.046	&&	0.243	&	0.194	&	0.232&	0.035\\
1&	0.242&	0.194&	0.231&	0.646&&	0.238&	0.189&	0.217&	0.548	&&	0.237	&	0.189	&	0.211&	0.489\\
\hline\hline																			
HMSE	&	91.24&	101.66&\multicolumn{2}{l}{101.74}&&	93.48&	102.25&\multicolumn{2}{l}{103.18}&& 94.28& 102.84&\multicolumn{2}{l}{103.68 } \\ \hline
\end{tabular}
\label{bm}
\end{table}

\begin{table}
\caption{Various BM relativities under various degrees of dependence}
\centering
\begin{tabular}{ l c c c c c c c c c c c c c }
 \hline
 	& \multicolumn{3}{l}{$\rho=$-0.8} & & \multicolumn{3}{l}{$\rho=$0}& & \multicolumn{3}{l}{$\rho=$0.4} \\ \cline{2-4} \cline{6-8}\cline{10-12}
	& \multicolumn{3}{l}{Relativity}  & & \multicolumn{3}{l}{Relativity} &  & \multicolumn{3}{l}{Relativity}  \\
 Level $l$ 	& $r_l^{\rm Dep}$ & $r_l^{\rm Indep}$ & $r_l^{\rm Tan}$ && $r_l^{\rm Dep}$ & $r_l^{\rm Indep}$ & $r_l^{\rm Tan}$ && $r_l^{\rm Dep}$ & $r_l^{\rm Indep}$ & $r_l^{\rm Tan}$ \\
 \hline
10&	0.769	&	1.278	&	1.296	&&	1.277	&	1.278	&	1.296	&&	1.624	&	1.278	&	1.296	\\
9&	0.447	&	0.446	&	0.483	&&	0.449	&	0.446	&	0.483	&&	0.432	&	0.446	&	0.483	\\
8&	0.382	&	0.337	&	0.380	&&	0.342	&	0.337	&	0.380	&&	0.312	&	0.337	&	0.380	\\
7&	0.349	&	0.288	&	0.335	&&	0.294	&	0.288	&	0.335	&&	0.263	&	0.288	&	0.335	\\
6&	0.328	&	0.259	&	0.308	&&	0.267	&	0.259	&	0.308	&&	0.236	&	0.259	&	0.308	\\
5&	0.312	&	0.239	&	0.290	&&	0.248	&	0.239	&	0.290	&&	0.218	&	0.239	&	0.290	\\
4&	0.300	&	0.224	&	0.275	&&	0.235	&	0.224	&	0.275	&&	0.206	&	0.224	&	0.275	\\
3&	0.289	&	0.212	&	0.262	&&	0.224	&	0.212	&	0.262	&&	0.197	&	0.212	&	0.262	\\
2&	0.280	&	0.203	&	0.249	&&	0.217	&	0.203	&	0.249	&&	0.190	&	0.203	&	0.249	\\
1&	0.272	&	0.194	&	0.231	&&	0.210	&	0.194	&	0.231    &&	0.184	&	0.194	&	0.231	\\
\hline\hline																			
HMSE	&24.92	&    49.98	&   51.68     && 325.42  	& 	325.42 	& 	325.50 	&&	887.56	&	899.06	& 	898.02  \\ \hline
\end{tabular}
\label{bm_dep}
\end{table}

\end{landscape}

\subsubsection{Further Analysis with the Individual Severity Assumption}

As Figure \ref{corrNS_corrSS} (a) shows, the correlation between frequency and aggregate severity is not a convenient statistic to investigate the frequency-severity dependence structure. For a more intuitive comparison of the frequency-severity dependence structure, it is better to obtain the correlation between frequency and individual severities.
Based on the estimation results in Section \ref{sec.6.2} and Assumption \ref{ahn.assumption}, the following statistics can be implied
 \begin{equation*}
{\rm corr}\left(N_{t},Y_{t,j}\bigg\vert \boldsymbol{x}^{[1]}, \boldsymbol{x}^{[2]}\right)
,\quad {\rm and } \quad{\rm corr}\left(Y_{t_1,j_1},Y_{t_2,j_2}\bigg\vert \boldsymbol{x}^{[1]}, \boldsymbol{x}^{[2]}\right)
\end{equation*}
as in Example \ref{ex.prop.2}.

Figure \ref{corrNY_corrYY} (a) shows the distribution of the correlations between frequency and severity.
The box plots of all groups in the figure tend to have significant negative dependence between frequency and individual severity.
Figure \ref{corrNY_corrYY} (b) presents the distribution of the correlations among individual severities.
As we show in Example \ref{ex.prop.2}, since the correlation between severities does not depend on the risk groups, we can represent it in a single boxplot.

\begin{figure}[!ht]
    \centering
        \subfloat[]{\includegraphics[width=0.8\textwidth, height=2.5in]{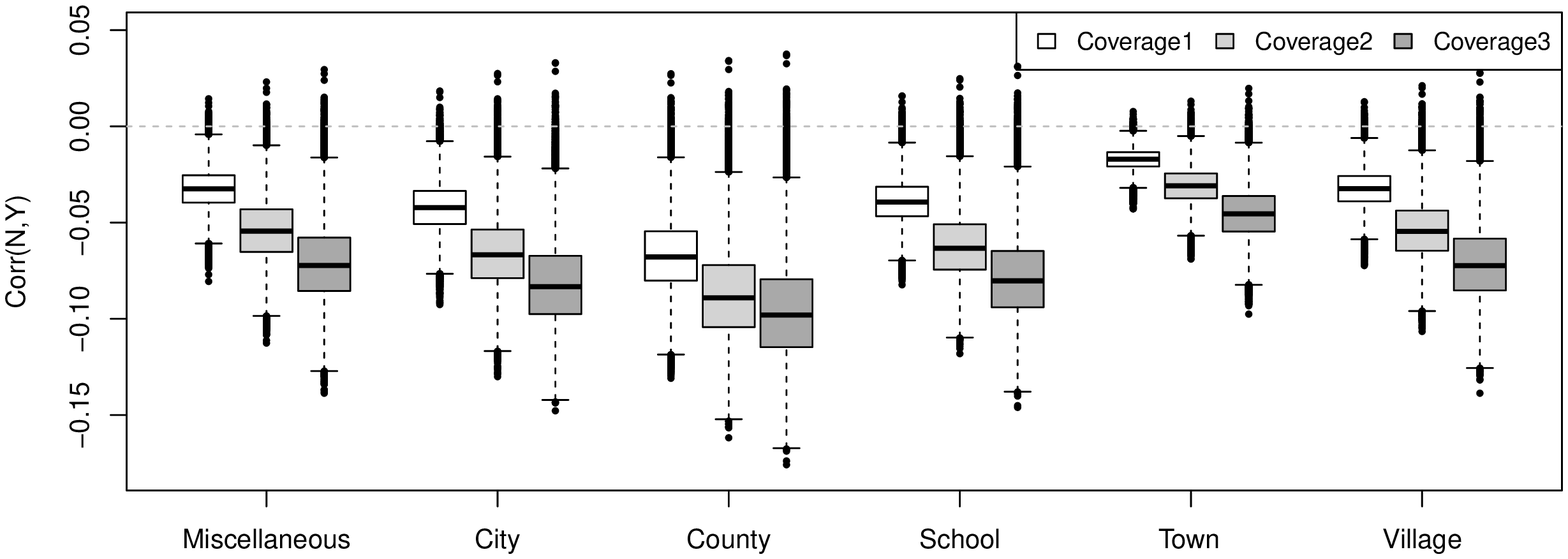}}
        \qquad
        \subfloat[]{\includegraphics[width=0.15\textwidth, height=2.5in]{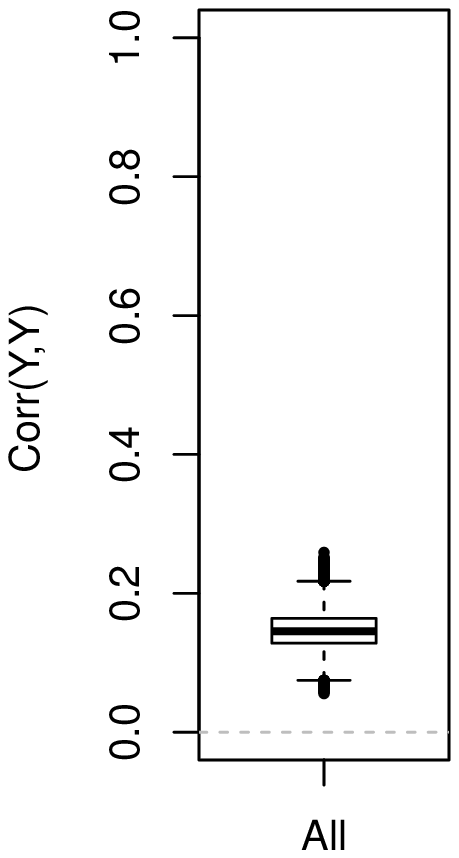}}
        \caption{ Box plot of correlations: (a) between frequency and severity and (b) between severities from the posterior samples. The horizontal dashed lines represent the criteria of independence. }
    \label{corrNY_corrYY}
\end{figure}

\subsection{BM Relativities}\label{sec.6.3}
From the estimation results\footnote{We employ the median of the posterior sample with the model in section \ref{sec.6.1} to calculate relativities.}
 under the dependent and the independent Heterogeneous Models, Table \ref{bm} reports the BM relativities defined in \eqref{eq.16}, \eqref{eq.1600}, and \eqref{eq.60} with $z=10$ BM levels under the -1/+1 and -1/+2 transition rules.
 Under both systems, except relativity at level 10, the relativities under the dependence model are slightly larger than those under the model that assumes independence between frequency and severity, which coincides with our intuition of negative dependence between frequency and severity.

To compare the quality of the various optimal relativities,
we propose a measure called the {\it hypothetical mean square error} (HMSE),
 which is defined as the mean square error (MSE) between the aggregate severity and the premium determined by BMS as defined in \eqref{eq.67}.
%



\begin{proposition}\label{prop3}
Under the Heterogeneous Model with given relativity $\boldsymbol{r}$, the MSE in \eqref{eq.67} is
 \begin{equation}\label{eq.68}
 \begin{aligned}
 {\rm HMSE}(\boldsymbol{r})
&:=\E{\left(S_{t+1}-\Lambda^{[1]} \Lambda^{[2]}  \boldsymbol{r}_L \right)^2}\\
&=\sum\limits_{\kappa\in\mathcal{K}} w_\kappa
\int\int \sum\limits_{l=1}^{z} \left(\lambda_\kappa^{[1]}\lambda_\kappa^{[2]} \theta^{[1]}\theta^{[2]}-
\lambda_\kappa^{[1]}\lambda_\kappa^{[2]}r_l
\right)^2
\pi_l\left(\lambda_\kappa^{[1]}\theta^{[1]}\right) h(\theta^{[1]}, \theta^{[2]}) {\rm d}\theta^{[1]}{\rm d}\theta^{[2]}.
\end{aligned}
 \end{equation}
\end{proposition}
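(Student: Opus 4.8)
The plan is to reduce ${\rm HMSE}(\boldsymbol{r})$ to the simplified objective on the second line of \eqref{eq.67} and then expand that expectation into the stated sum-integral by iterated conditioning. First I would record the identity $\E{S_{t+1}\mid \Lambda^{[1]},\Lambda^{[2]},\Theta^{[1]},\Theta^{[2]}}=\Lambda^{[1]}\Lambda^{[2]}\Theta^{[1]}\Theta^{[2]}$, which follows from $S_{t+1}=N_{t+1}M_{t+1}$ together with the conditional means $\lambda^{[1]}\theta^{[1]}$ of $N_{t+1}$ and $\lambda^{[2]}\theta^{[2]}$ of $M_{t+1}$ on $\{N_{t+1}>0\}$ (with $S_{t+1}=0$ otherwise). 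Since the premium $\Lambda^{[1]}\Lambda^{[2]}r_L$ depends only on the risk parameters and on the level $L$, conditioning on $(\Lambda^{[1]},\Lambda^{[2]},\Theta^{[1]},\Theta^{[2]},L)$ and using that $S_{t+1}$ is conditionally independent of $L$ given the risk parameters makes the cross term in the expanded square vanish; the realized severity may thus be replaced by its conditional mean, the leftover $\boldsymbol{r}$-free conditional variance being the part the hypothetical error sets aside. This gives ${\rm HMSE}(\boldsymbol{r})=\E{(\Lambda^{[1]}\Lambda^{[2]}\Theta^{[1]}\Theta^{[2]}-\Lambda^{[1]}\Lambda^{[2]}r_L)^2}$, the quantity minimized in \eqref{eq.67}.

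Second, I would evaluate this expectation by the tower rule along the hierarchy generating $(\kappa,\Theta^{[1]},\Theta^{[2]},L)$ in the steady state: a randomly chosen policyholder lies in risk class $\kappa$ with probability $w_\kappa$ as in \eqref{eq.1}; conditional on the class, $(\Theta^{[1]},\Theta^{[2]})$ has density $h$, independent of the observed characteristics by \eqref{eq.4}; and conditional on $(\lambda^{[1]}_\kappa,\theta^{[1]})$ the stationary level satisfies $\P{L=l}=\pi_l(\lambda^{[1]}_\kappa\theta^{[1]})$. Substituting the integrand $(\lambda^{[1]}_\kappa\lambda^{[2]}_\kappa\theta^{[1]}\theta^{[2]}-\lambda^{[1]}_\kappa\lambda^{[2]}_\kappa r_l)^2$ and averaging against $w_\kappa$, $h(\theta^{[1]},\theta^{[2]})$, and $\pi_l(\lambda^{[1]}_\kappa\theta^{[1]})$ reproduces \eqref{eq.68} term by term.

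The step demanding the most care is the conditional law of $L$, namely that given $(\lambda^{[1]}_\kappa,\theta^{[1]},\theta^{[2]})$ the level $L$ is distributed as $\pi_\cdot(\lambda^{[1]}_\kappa\theta^{[1]})$, depending on the random effects through $\theta^{[1]}$ alone. This holds because, under the $-1/+h$ rule, $L$ is a function of the past claim counts, which conditional on $(\lambda^{[1]}_\kappa,\theta^{[1]})$ are i.i.d.\ from $F_1(\cdot;\lambda^{[1]}_\kappa\theta^{[1]},\psi^{[1]})$ and never involve $\theta^{[2]}$; hence $L\perp\Theta^{[2]}\mid\Theta^{[1]}$ and all severity characteristics drop out, as already observed for $\pi_l$ just below the stationary-distribution display. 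This same conditional independence is what lets the mixed factor $\theta^{[1]}\theta^{[2]}$ be integrated against the purely frequency-driven weight $\pi_l(\lambda^{[1]}_\kappa\theta^{[1]})$ and makes the whole expression collapse to the single double integral over $h$; the remaining manipulations are the routine expansion of the quadratic.
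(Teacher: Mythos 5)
Your proposal is correct and follows essentially the same route as the paper's proof: condition on the risk class, the random effects, and the stationary level, and use the independence of the a priori characteristics from $(\Theta^{[1]},\Theta^{[2]})$ together with the fact that $L$ depends on the latent risk only through $\Theta^{[1]}$ to factor the joint law as $w_\kappa\,\pi_l\left(\lambda^{[1]}_\kappa\theta^{[1]}\right)h(\theta^{[1]},\theta^{[2]})$. The only substantive difference is that you explicitly justify replacing $S_{t+1}$ by its conditional mean via the orthogonality of the residual (noting the leftover $\boldsymbol{r}$-free conditional variance), a step the paper's proof bypasses by starting directly from the objective in \eqref{eq.67}.
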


Under the Heterogeneous Model, ${\rm HMSE}\left(\boldsymbol{r}^{\rm Dep}\right)$ with $\boldsymbol{r}^{\rm Dep}$ defined as in \eqref{eq.16} is guaranteed to be
smaller than ${\rm HMSE}(\boldsymbol{r})$ with any other choice of $\boldsymbol{r}$ by definition.
Table \ref{bm} shows the significant improvements in HMSE under the proposed relativities $\boldsymbol{r}^{\rm Dep}$. This improvement increases with stronger dependence, as Table \ref{bm_dep} shows.

\section{Conclusion}\label{sec.7}
In modeling aggregate severity, it is customary to assume independence between the claim frequency and severity.
Although, this independence model helps to simplify ongoing ratemaking systems such as BMS, it is often unwarranted.
 This paper proposes a flexible method to define dependence between the claim frequency and severity using a copula-based bivariate random effects model.
 A copula representation of bivariate random effects enables a convenient extension of marginal models that are already available in the literature.

Based on the proposed random effect model,
 our data analysis using the proposed model shows the significant frequency-severity dependence.
 Additionally, we show how to incorporate the frequency-severity dependence into the frequency based BMS.
 Our numerical experiments show that the predictions with the proposed BMS outperform the predictions of the classical BMS.



%

\section*{Acknowledgements}
 Jae Youn Ahn was supported by a National Research Foundation of Korea (NRF) grant funded by the Korean Government (NRF-2017R1D1A1B03032318).

\bibliographystyle{apalike}
\bibliography{Bib_Oh}

\newpage
\appendix

\section{Implication for Individual Severities} \label{sec.3.1}

The Heterogeneous Model specifies the distribution assumption of frequency and aggregate severity $(N_t, S_t)$. However, knowing that aggregate severity is the sum of the individual severities $\left( Y_{t,1}, \cdots, Y_{t,N_t}\right)$, it is natural to think of the distributional assumption of individual severities. This section studies the distributional assumption of individual severities, specifically the following:

\bigskip
\begin{assumption}\label{ahn.assumption}
   For severity, we specify a conditional distribution of the individual severities
   conditional on the observed and unobserved risk characteristics
        \begin{equation}\label{a.eq.1}
        \left(\boldsymbol{X}^{[1]}, \boldsymbol{X}^{[2]}\right)=\left(\boldsymbol{x}^{[1]}, \boldsymbol{x}^{[2]}\right)
        \quad\hbox{and}\quad \left(\Theta^{[1]}, \Theta^{[2]}\right)
    =\left(\theta^{[1]}, \theta^{[2]}\right).
        \end{equation}
   Specifically, we assume $N_{t}$ and $\left( Y_{t,1}, \cdots, Y_{t,z} \right)$ are independent for any $z\in\mathbb{N}$ conditional on the observed and unobserved risk characteristics in \eqref{a.eq.1}, and the conditional distribution is
  \begin{equation}\label{eq.y}
  Y_{t,j}\big\vert \left(\boldsymbol{x}^{[1]},\boldsymbol{x}^{[2]}, \theta^{[1]},
        \theta^{[2]}\right)  \iid {F}_2\left(\cdot\,;\, \lambda^{[2]}\theta^{[2]}, \psi^{[2]}\right), \quad 
  t, j\in \mathbb{N},
  \end{equation}
   where the distribution $F_2$ in the EDF has mean parameter $\lambda^{[2]}\theta^{[2]}$ and dispersion parameter $\psi^{[2]}$.

\end{assumption}

Note that Assumption \ref{ahn.assumption} in the above model is well defined in the sense that the distribution of
\[
( Y_{t,1}, \cdots, Y_{t,z_1} )
\]
is implied by the distributional assumption of $( Y_{t,1}, \cdots, Y_{t,z_2} )$ for positive integers $z_1$ and $z_2$ satisfying $z_1<z_2$.
Assumption \ref{ahn.assumption} is hypothetical in that it includes the distributional assumption of the unobserved individual severity $y_{t,j}$ for $j>n_{t}$. This hypothetical assumption of the missing or unobserved data is common in the statistical literature \citep{Neuhaus2011}, and can be found in the recent frequency-severity literature on insurance \citep{Liu2017, Cossette2018}

An important observation is that the individual severity assumption in Assumption \ref{ahn.assumption} is coherent with the assumption of severity under the Heterogeneous Model.
Specifically, under the EDF assumption of the distribution $F_2$, the distributional assumption in \eqref{eq.y} implies \eqref{eq.s}.
Hence, the EDF assumptions of $F_2$ in the Heterogeneous Model and in Assumption \ref{ahn.assumption} are critical for the conditions in \eqref{eq.s} and \eqref{eq.y} to be coherent.

We further assume that for $F_1$ to be in the EDF,
the simple representations of the mean and variance of the aggregate severity are possible, as the following proposition shows.

%
%
%

  \begin{proposition}\label{prop.2}
 Under the Heterogeneous Model with Assumption \ref{ahn.assumption}, assume that the frequency distribution function ${F}_1$ is in ${\it EDF}$.
    Furthermore,
  let ${V}_1$ and ${V}_2$ be the variance functions for the frequency distribution $F_1$ and severity distribution $F_2$, respectively. Then, we have the following conditional expressions.

 \begin{enumerate}
 \item[i.] The variance of the individual severity conditional a priori are
\[
\Var{Y_{t,j}\bigg\vert \boldsymbol{x}^{[1]}, \boldsymbol{x}^{[2]}}
= \psi^{[2]}\E{{V}_2\left({\lambda^{[2]}} \Theta^{[2]}\right) \bigg\vert \boldsymbol{x}^{[1]}, \boldsymbol{x}^{[2]} } +\left(\lambda^{[2]}\right)^2\Var{\Theta^{[2]}}
\]
for $t, j\in\mathbb{N}$, and the covariance between individual severities is
\begin{equation}\label{eq.301}
\cov{Y_{t_1,j_1},Y_{t_2,j_2}\bigg\vert \boldsymbol{x}^{[1]}, \boldsymbol{x}^{[2]} }
=\left(\lambda^{[2]} \right)^2\Var{\Theta^{[2]}  }
\end{equation}
for $t_1, t_2, j_1, j_2 \in \mathbb{N}$ and $j_1\neq j_2$.

 \item[ii.] The covariance between the individual severity and frequency conditional a priori are
\[
\cov{N_{t_1}, Y_{t_2,j}\bigg\vert \boldsymbol{x}^{[1]}, \boldsymbol{x}^{[2]}} =
 \lambda^{[1]}\lambda^{[2]}\E{\Theta^{[1]}\Theta^{[2]}} -
\lambda^{[1]}\lambda^{[2]} 
\]
for $t_1,t_2, j \in \mathbb{N}$.

\end{enumerate}
\end{proposition}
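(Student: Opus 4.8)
The plan is to exploit the structural fact that, under Assumption \ref{ahn.assumption}, all cross-sectional and serial dependence among the claim variables is generated solely through the shared random effects $\left(\Theta^{[1]},\Theta^{[2]}\right)$: conditionally on $\left(\Theta^{[1]},\Theta^{[2]}\right)$ and the covariates, the individual severities $\{Y_{t,j}\}$ are i.i.d.\ draws from $F_2\left(\cdot;\lambda^{[2]}\Theta^{[2]},\psi^{[2]}\right)$ and are independent of the frequencies $\{N_t\}$. I would therefore run every computation through the law of total variance and the law of total covariance, conditioning on the random effects so that each inner (conditional) covariance collapses and only the covariance of the conditional means survives. This mirrors the argument behind Proposition \ref{prop.1}, now applied to $Y_{t,j}$ rather than $S_t$. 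The two EDF moment identities I rely on are $\E{Y_{t,j}\mid\Theta^{[1]},\Theta^{[2]}}=\lambda^{[2]}\Theta^{[2]}$ with conditional variance $\Var{Y_{t,j}\mid\Theta^{[1]},\Theta^{[2]}}=\psi^{[2]}V_2\!\left(\lambda^{[2]}\Theta^{[2]}\right)$, and the frequency analogue $\E{N_t\mid\Theta^{[1]},\Theta^{[2]}}=\lambda^{[1]}\Theta^{[1]}$; all outer expectations below are implicitly conditional on $\left(\boldsymbol{x}^{[1]},\boldsymbol{x}^{[2]}\right)$, and I use that $\left(\Theta^{[1]},\Theta^{[2]}\right)$ is independent of the covariates.

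For part (i), I would decompose $\Var{Y_{t,j}\mid\boldsymbol{x}^{[1]},\boldsymbol{x}^{[2]}}=\E{\Var{Y_{t,j}\mid\Theta^{[1]},\Theta^{[2]}}}+\Var{\E{Y_{t,j}\mid\Theta^{[1]},\Theta^{[2]}}}$. The first term equals $\psi^{[2]}\E{V_2\!\left(\lambda^{[2]}\Theta^{[2]}\right)}$ and the second equals $\left(\lambda^{[2]}\right)^2\Var{\Theta^{[2]}}$, which is the stated formula. For the covariance of two distinct individual severities ($j_1\neq j_2$), the law of total covariance yields an expected conditional covariance plus the covariance of the conditional means; conditional independence of $Y_{t_1,j_1}$ and $Y_{t_2,j_2}$ given the random effects annihilates the first term, and since both conditional means are the common quantity $\lambda^{[2]}\Theta^{[2]}$, the surviving term is precisely $\left(\lambda^{[2]}\right)^2\Var{\Theta^{[2]}}$, establishing \eqref{eq.301}.

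For part (ii), the same decomposition gives $\cov{N_{t_1},Y_{t_2,j}\mid\boldsymbol{x}^{[1]},\boldsymbol{x}^{[2]}}=\E{\cov{N_{t_1},Y_{t_2,j}\mid\Theta^{[1]},\Theta^{[2]}}}+\cov{\lambda^{[1]}\Theta^{[1]},\lambda^{[2]}\Theta^{[2]}}$. The conditional covariance is zero because frequency and severity are conditionally independent given the random effects. Expanding the remaining term as $\lambda^{[1]}\lambda^{[2]}\left(\E{\Theta^{[1]}\Theta^{[2]}}-\E{\Theta^{[1]}}\E{\Theta^{[2]}}\right)$ and invoking the identification constraints $\E{\Theta^{[1]}}=\E{\Theta^{[2]}}=1$ produces $\lambda^{[1]}\lambda^{[2]}\E{\Theta^{[1]}\Theta^{[2]}}-\lambda^{[1]}\lambda^{[2]}$, as claimed.

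The computations are routine once the conditioning is set up; the only point requiring genuine care is justifying that every conditional covariance vanishes. The subtlety is that Assumption \ref{ahn.assumption} directly asserts conditional independence of $N_t$ and the severities only at the \emph{same} time $t$, so for the off-diagonal configurations ($t_1\neq t_2$) I must separately appeal to the across-subject and across-time conditional independence of the claim history built into the Heterogeneous Model, extended to individual severities by the coherence noted just before the proposition. Tracking which independence statement applies in each index pattern is the main bookkeeping obstacle, though it poses no real difficulty.
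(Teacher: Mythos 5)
Your proposal is correct and follows essentially the same route as the paper's proof: conditioning on the random effects $\left(\Theta^{[1]},\Theta^{[2]}\right)$, applying the conditional variance/covariance decompositions so that only the covariance of the conditional means survives, and finishing with the EDF moment identities and the normalization $\E{\Theta^{[1]}}=\E{\Theta^{[2]}}=1$. Your extra remark about which conditional-independence statement covers the $t_1\neq t_2$ configurations is a point the paper glosses over, but it does not change the argument.
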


\begin{proof}
  In the Heterogeneous Model, for the simplicity of calculation, we assume that the frequency distribution function ${F}_1$ is in ${\it EDF}(\lambda^{[1]}\Theta^{[1]}, \psi^{[1]})$.
  In addition to the assumption of the average severity in \eqref{eq.s}, we add an assumption of the individual severity in \eqref{eq.y}.

The variance of the individual severity conditional a priori are
\[
\begin{aligned}
&\Var{Y_{t,j}\bigg\vert \boldsymbol{x}^{[1]}, \boldsymbol{x}^{[2]}}\\
&= \E{ \Var{Y_{t,j} \big\vert  \boldsymbol{x}^{[1]}, \boldsymbol{x}^{[2]},\Theta^{[1]}, \Theta^{[2]}} \bigg\vert \boldsymbol{x}^{[1]}, \boldsymbol{x}^{[2]}}
+ \Var{ \E{Y_{t,j} \big\vert  \boldsymbol{x}^{[1]}, \boldsymbol{x}^{[2]},\Theta^{[1]}, \Theta^{[2]}} \bigg\vert \boldsymbol{x}^{[1]}, \boldsymbol{x}^{[2]}}\\
&= \psi^{[2]} \E{{V}_2\left({\lambda^{[2]}} \Theta^{[2]}\right) \bigg\vert \boldsymbol{x}^{[1]}, \boldsymbol{x}^{[2]} }
+ \left(\lambda^{[2]}  \right)^2\Var{\Theta^{[2]}}
\end{aligned}
\]
for $t, j=1,2,\cdots$, and the covariance between the individual severities is
\[
\begin{aligned}
& \cov{Y_{t_1,j_1},Y_{t_2,j_2}\bigg\vert \boldsymbol{x}^{[1]}, \boldsymbol{x}^{[2]}}\\
&=\E{ \cov{Y_{t_1,j_1}, Y_{t_2,j_2}\big\vert  \boldsymbol{x}^{[1]}, \boldsymbol{x}^{[2]},\Theta^{[1]}, \Theta^{[2]} } \bigg\vert \boldsymbol{x}^{[1]}, \boldsymbol{x}^{[2]}  } \\
& \quad\quad +\cov{ \E{Y_{t_1,j_1}\big\vert \boldsymbol{x}^{[1]}, \boldsymbol{x}^{[2]},\Theta^{[1]}, \Theta^{[2]} },
    \E{Y_{t_2,j_2}\big\vert \boldsymbol{x}^{[1]}, \boldsymbol{x}^{[2]},\Theta^{[1]}, \Theta^{[2]} }
     \bigg\vert \boldsymbol{x}^{[1]}, \boldsymbol{x}^{[2]} }\\
&=\left(\lambda^{[2]}  \right)^2 \Var{\Theta^{[2]} }
\end{aligned}
\]
for $t_1\neq t_2$ or $j_1\neq j_2$.
 The covariance between the individual severity and frequency conditional a priori are
\begin{equation}\label{eq.pf.1}
\begin{aligned}
&\cov{N_{t_1}, Y_{t_2,j}\bigg\vert \boldsymbol{x}^{[1]}, \boldsymbol{x}^{[2]}}\\
&=\E{ \E{N_{t_1}Y_{t_2,j}\big\vert  N_{t_1}, \boldsymbol{x}^{[1]}, \boldsymbol{x}^{[2]},\Theta^{[1]}, \Theta^{[2]}}\bigg\vert \boldsymbol{x}^{[1]}, \boldsymbol{x}^{[2]} } \\
& \quad\quad - \E{ \E{N_{t_1} \big\vert \boldsymbol{x}^{[1]}, \boldsymbol{x}^{[2]},\Theta^{[1]}, \Theta^{[2]}}\bigg\vert \boldsymbol{x}^{[1]}, \boldsymbol{x}^{[2]} }
\E{ \E{Y_{t_2,j}\big\vert \boldsymbol{x}^{[1]}, \boldsymbol{x}^{[2]},\Theta^{[1]}, \Theta^{[2]} }\bigg\vert \boldsymbol{x}^{[1]}, \boldsymbol{x}^{[2]} }\\
&= \E{ N_{t_1} \lambda^{[2]}\Theta^{[2]}  \bigg\vert \boldsymbol{x}^{[1]}, \boldsymbol{x}^{[2]} }
   - \lambda^{[1]}\lambda^{[2]}\E{\Theta^{[1]}}\E{\Theta^{[2]}} \\
&= \lambda^{[1]}\lambda^{[2]}\E{\Theta^{[1]}\Theta^{[2]}} - \lambda^{[1]}\lambda^{[2]}\E{\Theta^{[1]}}\E{\Theta^{[2]}}
\end{aligned}
\end{equation}
for $t_1\neq t_2$.

\end{proof}

Given the specific distribution, we can simplify the conditional expression in Proposition \ref{prop.2} further, as the following example shows.

\begin{example}\label{ex.prop.2}

 Under the Hetereogeneous Model with Assumption \ref{ahn.assumption},
consider the Poisson distribution for $F_1$ and Gamma distribution for $F_2$, with the log link functions for $\eta_1$ and $\eta_2$, respectively:
\[
     N_{t}\big\vert \left(\theta^{[1]}, \theta^{[2]}, \boldsymbol{x}^{[1]}, \boldsymbol{x}^{[2]}\right) \sim {\rm Pois}\left(\lambda^{[1]}\theta^{[1]}\right), \quad t\in \mathbb{N}
\]
and
\[
       Y_{t,j}\big\vert \left(\theta^{[1]}, \theta^{[2]}, \boldsymbol{x}^{[1]}, \boldsymbol{x}^{[2]}\right)  \sim {\rm Gamma}\left( \lambda^{[2]}\theta^{[2]}, \psi^{[2]}\right),\quad j=1, \cdots, z\quad\hbox{and}\quad t\in \mathbb{N},
\]
where
\[
\lambda^{[1]}=\exp\left(\boldsymbol{x}^{[1]}\boldsymbol{\beta}^{[1]}\right) \quad\hbox{and}\quad
\lambda^{[2]}=\exp\left(\boldsymbol{x}^{[2]} \boldsymbol{\beta}^{[2]}\right).
\]

Then, from Proposition \ref{prop.2}, we have the following covariance formulas:
\[
\begin{cases}
  {\rm cov}\left(N_{t_1},N_{t_2}\bigg\vert \boldsymbol{x}^{[1]}, \boldsymbol{x}^{[2]}\right)=\left(\lambda^{[1]}\right)^2\left(\exp(\sigma_1^2)-1\right);\\
  {\rm cov}\left(N_{t},Y_{t,j}\bigg\vert \boldsymbol{x}^{[1]}, \boldsymbol{x}^{[2]}\right)={\rm cov}\left(N_{t_1},Y_{t_2,j}\bigg\vert \boldsymbol{x}^{[1]}, \boldsymbol{x}^{[2]}\right)=\lambda^{[1]} \lambda^{[2]} \left(\exp(\rho\sigma_1\sigma_2)-1\right);\\
  {\rm cov}\left(Y_{t_1,j_1},Y_{t_2,j_2}\bigg\vert \boldsymbol{x}^{[1]}, \boldsymbol{x}^{[2]}\right)=\left(\lambda^{[2]}\right)^2 \left(\exp(\sigma_2^2)-1\right);
\end{cases}
\]
and the corresponding correlation formulas,
\[
\begin{cases}
{\rm corr}\left(N_{t_1},N_{t_2}\bigg\vert \boldsymbol{x}^{[1]}, \boldsymbol{x}^{[2]}\right)=\frac{\lambda^{[1]}\left\{\exp(\sigma_1^2)-1\right\}}{1+\lambda^{[1]}\left\{\exp(\sigma_1^2)-1\right\}};\\
{\rm corr}\left(N_{t},Y_{t,j}\bigg\vert \boldsymbol{x}^{[1]}, \boldsymbol{x}^{[2]}\right)=
{\rm corr}\left(N_{t_1},Y_{t_2,j}\bigg\vert \boldsymbol{x}^{[1]}, \boldsymbol{x}^{[2]}\right)=\frac{\lambda^{[1]} \left\{\exp(\rho\sigma_1\sigma_2)-1\right\}}{\sqrt{\lambda^{[1]}\left(1+\lambda^{[1]}\left\{\exp(\sigma_1^2)-1\right\}\right)}\sqrt{\left\{1+\psi^{[2]}\right\}\exp(\sigma_2^2)-1}};\\
{\rm corr}\left(Y_{t_1,j_1},Y_{t_2,j_2}\bigg\vert \boldsymbol{x}^{[1]}, \boldsymbol{x}^{[2]}\right)=\frac{ \exp(\sigma_2^2)-1}{\left\{1+\psi^{[2]}\right\}\exp(\sigma_2^2)-1};
\end{cases}
\]
for $t, t_1, t_2, j_1, j_2\in\mathbb{N}$ and $j_1\neq j_2$.

\end{example}


\section{Proofs}

\begin{proof}[Proof of Proposition \ref{prop.1}]
 Assume the Heterogeneous Model, and further assume that the frequency distribution function ${F}_1$ is in ${\it EDF}(\lambda^{[1]}\Theta^{[1]}, \psi^{[1]})$.
Then, using the conditional mean and variance method, the mean and variance of the aggregate severity conditional a priori are
\[
\begin{aligned}
\E{S_{t}\big\vert \boldsymbol{x}^{[1]}, \boldsymbol{x}^{[2]}}
&=\E{ \E{N_{t}M_{t}\big\vert N_t, \boldsymbol{x}^{[1]}, \boldsymbol{x}^{[2]},\Theta^{[1]}, \Theta^{[2]}} \big\vert \boldsymbol{x}^{[1]}, \boldsymbol{x}^{[2]} }\\
&=\E{ N_{t}  \lambda^{[2]}\Theta^{[2]}  \bigg\vert \boldsymbol{x}^{[1]}, \boldsymbol{x}^{[2]} }  \\
&=\lambda^{[1]} {\lambda^{[2]}}\E{\Theta^{[1]}\Theta^{[2]} }\\
\end{aligned}
\]
and
\[
\begin{aligned}
&\Var{S_{t}\big\vert \boldsymbol{x}^{[1]}, \boldsymbol{x}^{[2]}}\\
&\quad\quad= \E{ \Var{S_{t}\big\vert N_t, \boldsymbol{x}^{[1]}, \boldsymbol{x}^{[2]},\Theta^{[1]}, \Theta^{[2]}} \bigg\vert \boldsymbol{x}^{[1]}, \boldsymbol{x}^{[2]} }+ \Var{ \E{S_{t}\big\vert N_t, \boldsymbol{x}^{[1]}, \boldsymbol{x}^{[2]},\Theta^{[1]}, \Theta^{[2]}} \bigg\vert \boldsymbol{x}^{[1]}, \boldsymbol{x}^{[2]} }\\
&\quad\quad= \E{ \lambda^{[1]}\Theta^{[1]} V_2\left(\lambda^{[2]}\Theta^{[2]} \right) \psi^{[2]}
\Bigg\vert \boldsymbol{x}^{[1]}, \boldsymbol{x}^{[2]} }+ \left(\lambda^{[1]} \lambda^{[2]}\right)^2 \Var{\Theta^{[1]}\Theta^{[2]} }\\
&\quad\quad\quad\quad
 +\left( \lambda^{[2]} \right)^2 \E{\left( \Theta^{[2]} \right)^2 V_1\left(\lambda^{[1]} \Theta^{[1]}\right) \psi^{[1]}\Bigg\vert \boldsymbol{x}^{[1]}, \boldsymbol{x}^{[2]}}.
 \\
\end{aligned}
\]
Similarly, we calculate the covariance formulas related to severity conditional a priori as
\[
\begin{aligned}
&\cov{S_{t_1}, S_{t_2}\bigg\vert \boldsymbol{x}^{[1]}, \boldsymbol{x}^{[2]} }\\
&=\E{ \cov{N_{t_1}M_{t_1},N_{t_2}M_{t_2}\big\vert N_{t_1}, N_{t_2}, \boldsymbol{x}^{[1]}, \boldsymbol{x}^{[2]},\Theta^{[1]}, \Theta^{[2]} } \bigg\vert \boldsymbol{x}^{[1]}, \boldsymbol{x}^{[2]}  } \\
& \quad\quad +\cov{ \E{N_{t_1}M_{t_1}\big\vert N_{t_1}, N_{t_2},  \boldsymbol{x}^{[1]}, \boldsymbol{x}^{[2]},\Theta^{[1]}, \Theta^{[2]} },
    \E{N_{t_2}M_{t_2}\big\vert N_{t_1}, N_{t_2},\boldsymbol{x}^{[1]}, \boldsymbol{x}^{[2]},\Theta^{[1]}, \Theta^{[2]} }
     \bigg\vert  \boldsymbol{x}^{[1]}, \boldsymbol{x}^{[2]} }\\
&=\cov{  N_{t_1}  \lambda^{[2]}\Theta^{[2]} \, ,  N_{t_2}  \lambda^{[2]}\Theta^{[2]} \, \bigg\vert  \boldsymbol{x}^{[1]}, \boldsymbol{x}^{[2]} }\\
&=\left(\lambda^{[1]}\lambda^{[2]}\right)^2\Var{\Theta^{[1]}\Theta^{[2]} },
\end{aligned}
\]
for $t_1\neq t_2$, where the last equation comes from $ \cov{S_{t_1},S_{t_2}\big\vert N_{t_1}, N_{t_2}, \boldsymbol{x}^{[1]}, \boldsymbol{x}^{[2]},\Theta^{[1]}, \Theta^{[2]} } =0$,
and
\[
\begin{aligned}
&\cov{N_{t}, S_{t}\bigg\vert \boldsymbol{x}^{[1]}, \boldsymbol{x}^{[2]}}\\
&=\E{ \E{N_{t}S_{t}\big\vert  \boldsymbol{x}^{[1]}, \boldsymbol{x}^{[2]},\Theta^{[1]}, \Theta^{[2]} }
 \bigg\vert \boldsymbol{x}^{[1]}, \boldsymbol{x}^{[2]}  } -\E{N_{t}  \bigg\vert \boldsymbol{x}^{[1]}, \boldsymbol{x}^{[2]}  }
\E{S_{t} \bigg\vert \boldsymbol{x}^{[1]}, \boldsymbol{x}^{[2]} }   \\
&=\lambda^{[2]} \psi^{[1]}\E{{V}_1\left( \lambda^{[1]} \Theta^{[1]}\right) \Theta^{[2]}} +
\left(\lambda^{[1]}\right)^2\lambda^{[2]}\E{\left(\Theta^{[1]}\right)^2\Theta^{[2]}}
-\left(\lambda^{[1]}\right)^2\lambda^{[2]}\E{\Theta^{[1]}}\E{\Theta^{[1]}\Theta^{[2]}},
\end{aligned}
\]
where the last equality comes from
\[
\begin{aligned}
\E{N_{t}S_{t}\big\vert \boldsymbol{x}^{[1]}, \boldsymbol{x}^{[2]},\Theta^{[1]}, \Theta^{[2]} }
&= \E{ \E{N_{t},S_{t}\big\vert N_{t}, \boldsymbol{x}^{[1]}, \boldsymbol{x}^{[2]},\Theta^{[1]}, \Theta^{[2]} } \bigg\vert \boldsymbol{x}^{[1]}, \boldsymbol{x}^{[2]},\Theta^{[1]}, \Theta^{[2]}  }\\
&= \E{ N_{t}^2 \E{M_{t}\big\vert N_{t}, \boldsymbol{x}^{[1]}, \boldsymbol{x}^{[2]},\Theta^{[1]}, \Theta^{[2]} } \bigg\vert \boldsymbol{x}^{[1]}, \boldsymbol{x}^{[2]},\Theta^{[1]}, \Theta^{[2]} }\\
&=  \E{ N_{t}^2 \lambda^{[2]}\Theta^{[2]}  \bigg\vert \boldsymbol{x}^{[1]}, \boldsymbol{x}^{[2]},\Theta^{[1]}, \Theta^{[2]} }\\
&= \lambda^{[2]} \Theta^{[2]} \left\{ {V}_1\left( \lambda^{[1]} \Theta^{[1]}\right) \psi^{[1]} + \left( \lambda^{[1]} \Theta^{[1]}\right)^2 \right\}.
\end{aligned}
\]

The covariance between the frequencies conditional on a priori is
\[
\begin{aligned}
&\cov{N_{t_1}, N_{t_2}\bigg\vert \boldsymbol{x}^{[1]}, \boldsymbol{x}^{[2]}}\\
&\quad\quad=\E{ \cov{N_{t_1},N_{t_2}\big\vert \boldsymbol{x}^{[1]}, \boldsymbol{x}^{[2]},\Theta^{[1]}, \Theta^{[2]} } \bigg\vert \boldsymbol{x}^{[1]}, \boldsymbol{x}^{[2]}  } \\
& \quad\quad\quad\quad +\cov{ \E{N_{t_1}\big\vert \boldsymbol{x}^{[1]}, \boldsymbol{x}^{[2]},\Theta^{[1]}, \Theta^{[2]} },
    \E{N_{t_2}\big\vert \boldsymbol{x}^{[1]}, \boldsymbol{x}^{[2]},\Theta^{[1]}, \Theta^{[2]} }
     \bigg\vert \boldsymbol{x}^{[1]}, \boldsymbol{x}^{[2]} }\\
&\quad\quad=\left(\lambda^{[1]}\right)^2\Var{\Theta^{[1]}}
\end{aligned}
\]
for 
$t_1\neq t_2$. Finally, we can obtain the formula for $\cov{N_{t_1}, S_{t_2}\bigg\vert \boldsymbol{x}^{[1]}, \boldsymbol{x}^{[2]}}$ similarly.

\end{proof}

\begin{proof}[Proof of Theorem \ref{thm.oh.1}]
Assume the Heterogeneous Model.
First, note that $r_l^{\rm Dep}$ for $l=0, \cdots,{z}$ satisfy
\[
0=\E{-2\Lambda^{[1]} \Lambda^{[2]}(S_{T+1}-r_L^{\rm Dep} \Lambda^{[1]} \Lambda^{[2]})\big\vert L=l},
\]
which is equivalent to \eqref{eq.16}.
Here, we can calculate $\E{\left( \Lambda^{[1]}\Lambda^{[2]}\right)^2 \Theta^{[1]}\Theta^{[2]}\bigg\vert L=l}$ in \eqref{eq.16} as
\begin{equation*}
\begin{aligned}
&\E{\left( \Lambda^{[1]}\Lambda^{[2]}\right)^2 \Theta^{[1]}\Theta^{[2]}\bigg\vert L=l}\\
&=\sum\limits_{\kappa\in\mathcal{K}} \left(\lambda_{\kappa}^{[1]}\lambda_{\kappa}^{[2]}\right)^2
  \E{\Theta^{[1]}\Theta^{[2]} \big\vert L=l, \Lambda^{[1]}=\lambda_{\kappa}^{[1]}, \Lambda^{[2]}=\lambda_{\kappa}^{[2]}}\P{\Lambda^{[1]}=\lambda_{\kappa}^{[1]}, \Lambda^{[2]}=\lambda_{\kappa}^{[2]}\big\vert L=l}\\
&=\sum\limits_{\kappa\in\mathcal{K}} \left(\lambda_{\kappa}^{[1]}\lambda_{\kappa}^{[2]}\right)^2
  \int\int \theta^{[1]}\theta^{[2]} f(\theta^{[1]}, \theta^{[2]}\big\vert L=l, \Lambda^{[1]}=\lambda_{\kappa}^{[1]}, \Lambda^{[1]}=\lambda_{\kappa}^{[2]}) {\rm d}\theta^{[1]}{\rm d}\theta^{[2]}\,
  \P{\Lambda^{[1]}=\lambda_{\kappa}^{[1]}, \Lambda^{[2]}=\lambda_{\kappa}^{[2]} \big\vert L=l}\\
&=\frac{\sum\limits_{\kappa\in\mathcal{K}} w_{\kappa} \left(\lambda_{\kappa}^{[1]}\lambda_{\kappa}^{[2]}\right)^2  \int\int \theta^{[1]}\theta^{[2]}  \pi_l(\lambda_{\kappa}^{[1]}\theta^{[1]}) h(\theta^{[1]}, \theta^{[2]}){\rm d}\theta^{[1]}{\rm d} \theta^{[2]}}
{\sum\limits_{\kappa\in\mathcal{K}}w_{\kappa} \int  \pi_l(\lambda_{\kappa}^{[1]}\theta^{[1]}) g_1(\theta^{[1]}){\rm d}\theta^{[1]}  },
\end{aligned}
\end{equation*}
 where the last equation comes from the independence between the a priori and a posteriori.

Similarly, we can calculate $\E{\left( \Lambda^{[1]}\Lambda^{[2]}\right)^2\big\vert L=l} $ in \eqref{eq.16} as
\[
\begin{aligned}
\E{\left( \Lambda^{[1]}\Lambda^{[2]}\right)^2\big\vert L=l}
&=\sum\limits_{\kappa\in\mathcal{K}} \left(\lambda_{\kappa}^{[1]}\lambda_{\kappa}^{[2]}\right)^2
  \P{ \Lambda^{[1]}=\lambda_{\kappa}^{[1]}, \Lambda^{[2]}=\lambda_{\kappa}^{[2]}\big\vert L=l}\\
&=\sum\limits_{\kappa\in\mathcal{K}} \left(\lambda_{\kappa}^{[1]}\lambda_{\kappa}^{[2]}\right)^2  \frac{\P{L=l\big\vert  \Lambda^{[1]}=\lambda_{\kappa}^{[1]}, \Lambda^{[2]}=\lambda_{\kappa}^{[2]}}\P{\Lambda^{[1]}=\lambda_{\kappa}^{[1]}, \Lambda^{[2]}=\lambda_{\kappa}^{[2]}} }{\P{L=l}}\\
&=\frac{\sum\limits_{\kappa\in\mathcal{K}} w_{\kappa} \left(\lambda_{\kappa}^{[1]}\lambda_{\kappa}^{[2]}\right)^2  \int \pi_l(\lambda_{\kappa}^{[1]}\theta^{[1]})g_1(\theta^{[1]}){\rm d}\theta^{[1]}}
{\sum\limits_{\kappa\in\mathcal{K}} w_{\kappa}  \int \pi_l(\lambda_{\kappa}^{[1]}\theta^{[1]}) g_1(\theta^{[1]}){\rm d}\theta^{[1]} },
\end{aligned}
\]
 where the last equation comes from the independence between the a priori and a posteriori.
\end{proof}

\begin{proof}[Proof of Corollary \ref{cor.1}]
The proof is similar to that of Theorem \ref{thm.oh.1}, except we can represent the bivariate density function $h$ as the product of two marginal density functions.
\end{proof}

\begin{proof}[Proof of Proposition \ref{prop3}]
Assume the Heterogeneous Model and let relativity $\boldsymbol{r}$ be given. Then, we have
 \begin{equation}\label{eq.68}
 \begin{aligned}
&\E{\left(\E{S_{t+1}\big\vert \Lambda^{[1]},  \Lambda^{[2]}, \Theta^{[1]}, \Theta^{[2]}}-\Lambda^{[1]} \Lambda^{[2]} r_L\right)^2}\\
&\quad\quad\quad\quad=\sum\limits_{l=1}^{z}\sum\limits_{\kappa\in\mathcal{K}}
\E{\left(\Lambda^{[1]}\Lambda^{[2]}\Theta^{[1]}\Theta^{[2]}-\Lambda^{[1]} \Lambda^{[2]} r_L\right)^2\bigg\vert \Lambda^{[1]}=\lambda_{\kappa}^{[1]}, \Lambda^{[2]}=\lambda_{\kappa}^{[2]}, L=l}\\
&\quad\quad\quad\quad\quad\quad\quad\quad\quad\quad\quad\quad\P{\Lambda^{[1]}=\lambda_{\kappa}^{[1]}, \Lambda^{[2]}=\lambda_{\kappa}^{[2]}\big\vert L=l}\P{L=l}\\
&\quad\quad\quad\quad=\sum\limits_{l=1}^{z}\sum\limits_{\kappa\in\mathcal{K}}
\int\int \left(\lambda_\kappa^{[1]}\lambda_\kappa^{[2]} \theta^{[1]}\theta^{[2]}-
\lambda_\kappa^{[1]}\lambda_\kappa^{[2]}r_l
\right)^2\\
&\quad\quad\quad\quad\quad\quad\quad\quad\quad\quad\quad\quad
h\left(\theta^{[1]}, \theta^{[2]}\bigg\vert
\Lambda^{[1]}=\lambda_{\kappa}^{[1]}, \Lambda^{[2]}=\lambda_{\kappa}^{[2]}, L=l
\right)
{\rm d}\theta^{[1]}{\rm d}\theta^{[2]}\\
&\quad\quad\quad\quad\quad\quad\quad\quad\quad\quad\quad\quad\quad\quad\quad\quad\quad\quad\quad
\P{\Lambda^{[1]}=\lambda_{\kappa}^{[1]}, \Lambda^{[2]}=\lambda_{\kappa}^{[2]}\big\vert L=l}\P{L=l}\\
&\quad\quad\quad\quad=\sum\limits_{l=1}^{z}\sum\limits_{\kappa\in\mathcal{K}} w_\kappa
\int\int \left(\lambda_\kappa^{[1]}\lambda_\kappa^{[2]} \theta^{[1]}\theta^{[2]}-
\lambda_\kappa^{[1]}\lambda_\kappa^{[2]}r_l
\right)^2
\pi_l\left(\lambda_\kappa^{[1]}\theta^{[1]}, \psi^{[1]}\right) h(\theta^{[1]}, \theta^{[2]}) {\rm d}\theta^{[1]}{\rm d}\theta^{[2]}
\end{aligned}
 \end{equation}
 where the last equation is from the independence between the a priori and a posteriori.


\end{proof}

\section{Tables}

\begin{table}[h!]
\caption{Average frequency and severity by entity type and Year }
\centering
\resizebox{\linewidth}{!}{
\begin{tabular}{ l r r r r r r r r r r r r r r}
 \hline
  & \multicolumn{5}{l}{Average Frequency} & &  \multicolumn{5}{l}{Average Severity} \\ \cline{2-6} \cline{8-12}
  & 2006 & 2007 & 2008 & 2009 & 2010 & &  2006 & 2007 & 2008 & 2009 & 2010 \\
  \hline
 Miscellaneous 		& 0.09&	0.13&	0.15&	0.15&	0   && 12677.5&	7018.4&	 2189.4&	 1997.7&	NaN\\
 City				& 0.89&	1.02&	1.07&	0.89&	0.94&&  4622.8&	6463.5&	 3629.2&	 5289.5&	3000.9\\
 County			& 2.68&  	3.28&	4.04&	3.30&	3.39&&  9122.6&   	8108.7&	 8904.3&	 6784.9&	6321.2\\
 School			& 0.21&	0.34&	0.29&	0.20&	0.43&&  5857.0&	4593.7&	 4233.1&	 3659.4&	4232.6\\
 Town			& 0.03&  	0.05&	0.05&	0.07&	0.06&&  3222.7&	4279.2&	46020.2&	 4726.7&	3588.9\\
 Village			& 0.28&	0.32&	0.33&	0.18&	0.17&& 15656.7&	5942.0&	 5622.1&	10195.4&	6880.1\\
 \hline\hline
 Total 			& 0.52&  0.68&     0.75&   0.66&     0.76&&  8550.0&  6930.2& 7478.6& 6334.3& 5504.38 \\
 \hline
\end{tabular}
} 
\label{entity_year}
\end{table}
				
\begin{table}[h!]
\caption{Average frequency and severity by coverage and Year }
\centering
\resizebox{\linewidth}{!}{
\begin{tabular}{ l r r r r r r r r r r r r r r}
 \hline
  & \multicolumn{5}{l}{Average Frequency} & &  \multicolumn{5}{l}{Average Severity} \\ \cline{2-6} \cline{8-12}
  & 2006 & 2007 & 2008 & 2009 & 2010 & &  2006 & 2007 & 2008 & 2009 & 2010 \\
  \hline
Coverage 1& 0.08&  0.06&    0.06&    0.02&    0.09 &&  4430.2& 7920.3& 2370.4& 3474.35& 4508.81\\
Coverage 2& 0.18&  0.24&    0.32&    0.19&    0.33 && 16364.1& 4936.4& 4862.0& 4218.45& 6022.18\\
Coverage 3& 1.30&  1.73&    1.91&    1.70&    1.84 &&  7682.7& 7178.6& 8087.9& 6588.33& 5465.18\\
 \hline\hline
 Total 			& 0.52&  0.68&     0.75&   0.66&     0.76&&  8550.0&  6930.2& 7478.6& 6334.3& 5504.38 \\
 \hline
\end{tabular}
} 
\label{coverage_year}
\end{table}

\begin{table}[h!]
\centering
\caption{Observable policy characteristics used as covariates} \label{tab.x}
\begin{tabular}{l|l r r r r r r r }
\hline
Category & \multirow{2}{*}{Description} &&  \multicolumn{3}{c}{\multirow{2}{*}{Proportions}} \\
variables &  &  &  &   \\
\hline
Entity type   & Type of local government entity    \\
		& \quad\quad\quad\quad\quad\quad Miscellaneous  	&& \multicolumn{3}{c}{5.03$\%$} \\
		& \quad\quad\quad\quad\quad\quad City			&& \multicolumn{3}{c}{9.66$\%$} \\
		& \quad\quad\quad\quad\quad\quad County		&& \multicolumn{3}{c}{11.47$\%$} \\
		& \quad\quad\quad\quad\quad\quad School		&& \multicolumn{3}{c}{36.42$\%$} \\
		& \quad\quad\quad\quad\quad\quad Town			&& \multicolumn{3}{c}{16.90$\%$} \\
		& \quad\quad\quad\quad\quad\quad Village 		&& \multicolumn{3}{c}{20.52$\%$} \\
\hline
Coverage & Collision coverage amount for old and new vehicles\\
		& \quad\quad\quad\quad\quad\quad Coverage $\in (0,\quad\,\, 0.14] = 1 $   	&& \multicolumn{3}{c}{33.40$\%$} \\
		& \quad\quad\quad\quad\quad\quad Coverage $\in (0.14, \, 0.74] = 2 $		&& \multicolumn{3}{c}{33.20$\%$} \\
		& \quad\quad\quad\quad\quad\quad Coverage $\in (0.74,\quad \infty] = 3$	&& \multicolumn{3}{c}{33.40$\%$} \\
\hline
\end{tabular}
\end{table}

\begin{table}[h!]
\centering
\caption{Percentage and number of claims by frequency and policy year} \label{tab.n}
\begin{tabular}{l r r r r r r r r r }
\hline
& \multicolumn{7}{l}{Training } && \multicolumn{1}{l}{Test} \\ \cline{2-8} \cline{10-10}
Frequency & 2006 	& 2007 	& 2008 	& 2009 	& 2010	&  Count & Prop$(\%)$ && 2011 \\
\hline
0	&	329	&	307	&	304	&	304	&	284	&	1528	&	73.64	&&	288	\\
1	&	61	&	61	&	52	&	53	&	61	&	288	&	13.88	&&	46	\\
2	&	26	&	17	&	12	&	23	&	18	&	96	&	4.63	&&	12	\\
3	&	6	&	9	&	16	&	8	&	13	&	52	&	2.51	&&	14	\\
4	&	8	&	10	&	4	&	6	&	5	&	33	&	1.59	&&	6	\\
5	&	4	&	2	&	3	&	6	&	6	&	21	&	1.01	&&	6	\\
6	&	1	&	1	&	8	&	2	&	4	&	16	&	0.77	&&	1	\\
7	&	1	&	3	&	4	&	1	&	2	&	11	&	0.53	&&	2	\\
8	&		&	4	&		&	1	&		&	5	&	0.24	&&	1	\\
9	&	1	&	2	&	1	&	1	&		&	5	&	0.24	&&	1	\\
10	&		&	1	&	1	&	2	&	2	&	6	&	0.29	&&		\\
11	&		&	1	&	1	&		&	1	&	3	&	0.14	&&	2	\\
12	&	2	&		&		&		&		&	2	&	0.10	&&		\\
13	&		&	1	&	1	&		&	1	&	3	&	0.14	&&		\\
14	&		&		&		&		&	1	&	1	&	0.05	&&		\\
16	&		&		&	1	&	1	&		&	2	&	0.10	&&		\\
18	&		&		&	1	&		&		&	1	&	0.05	&&		\\
21	&		&		&		&	1	&		&	1	&	0.05	&&		\\
22	&		&		&		&		&	1	&	1	&	0.05	&&		\\
\hline
Count&	439	&	419	&	409	&	409	&	399	&	2075	&	100	&&	379	\\
\hline \hline
\end{tabular}
\end{table}

\begin{table}[h!]
\centering
\caption{Percentage and number of claims by frequency and policy year} \label{tab.m}
\begin{tabular}{l r r r r r r r r }
\hline
& \multicolumn{6}{l}{Training } && \multicolumn{1}{l}{Test} \\ \cline{2-7} \cline{9-9}
Frequency & 2006 	& 2007 	& 2008 	& 2009 	& 2010	& Avg.Sev. && 2011 \\
\hline
1	&	10605&	6670	&	6969	&	5111	&	3843	&	 6672&&  5394\\
2	&	4974	&	6864	&	6158	&	6795	&	11186&	 7058&&  3700\\
3	&	16910&	9248	&	10741&	10579&	6777	&	10179&&  6785\\
4	&	9315	&	12667&	8594	&	3486	&	9329	&	 9186&&  4037\\
5	&	6472	&	4765	&	8556	&	9870	&	5702	&	 7358&& 10941\\
6	&	1583	&	1361	&	5394	&	5402	&	2579	&	 4201&&  2389\\
7	&	10272&	5132	&	8143	&	3566	&	2205	&	 6020&&  3685\\
8	&		&	4479	&		&	1827	&		&	 3949&&  6316\\
9	&	14784&	5638	&	6288	&	10006&		&	 8471&&  3836\\
10	&		&	3249	&	12356&	3645	&	6708	&	 6052&&  1550\\
11	&		&	8542	&	3205	&		&	684	&	 4144&&	\\
12	&	4419	&		&		&		&		&	 4419&&	\\
13	&		&	2287	&	8286	&		&	2416	&	 4329&&	\\
14	&		&		&		&		&	2931	&	 2931&&	\\
16	&		&		&	3010	&	7942	&		&	 5476&&	\\
18	&		&		&	8534	&		&		&	 8534&&	\\
21	&		&		&		&	4701	&		&	 4701&&	\\
22	&		&		&		&		&	4882	&	 4882&&	\\
\hline
AvgSev&	8550	&	6930	&	7479	&	6334	&	5504	&	6891&& 5476\\
\hline \hline
\end{tabular}
\end{table}


\end{document}